\renewcommand{\ALG@name}{Mechanism}
\newcommand{\SW}{\text{SW}}
\newcommand{\boldv}{\mathbf{v}}
\newcommand{\sucv}{\boldsymbol{\succ}_{\mathbf{v}}}
\newcommand{\dist}{\text{dist}}
\newcommand{\bv}{\mathbf{v}}
\DeclareMathOperator*{\argmax}{arg\,max}
\newtheorem{claim}{Claim}
\newcommand{\vv}{\mathbf{v}}
\newtheorem{theorem}{Theorem}
\newtheorem{corollary}[theorem]{Corollary}
\newenvironment{customthm}[1]
  {\innercustomthm}
  {\endinnercustomthm}
\theoremstyle{definition}
\newtheorem{definition}{Definition}
\title{\bf Don't Roll the Dice, Ask Twice: \\ The Two-Query Distortion of Matching Problems and Beyond
}
\author[1]{Georgios Amanatidis}
\author[2]{Georgios Birmpas}
\author[3]{Aris Filos-Ratsikas}
\author[4]{Alexandros A. Voudouris}
\affil[1]{\emph{\normalsize Department of Mathematical Sciences, University of Essex, UK}}
\affil[2]{\emph{\normalsize Department of Computer, Control \& Management Engineering, Sapienza University of Rome, Italy}}
\affil[3]{\emph{\normalsize School of Informatics, University of Edinburgh, UK}}
\affil[4]{\emph{\normalsize School of Computer Science and Electronic Engineering, University of Essex, UK}}
\date{}
\begin{document}

\allowdisplaybreaks

\maketitle

\begin{abstract}
In most social choice settings, the participating agents express their preferences over the different alternatives in the form of linear orderings. While this clearly simplifies preference elicitation, it inevitably leads to poor performance with respect to optimizing a cardinal objective, such as the social welfare, since the values of the agents remain virtually unknown. This loss in performance because of lack of information is measured by the notion of {\em distortion}.  A recent array of works put forward the agenda of designing mechanisms that learn the values of the agents for a small number of alternatives via {\em queries}, and use this limited extra information to make better-informed decisions, thus improving distortion. Following this agenda, in this work we focus on a class of combinatorial problems that includes most well-known matching problems and several of their generalizations, such as One-Sided Matching, Two-Sided Matching, General Graph Matching, and $k$-Constrained Resource Allocation. We design {\em two-query} mechanisms that achieve the best-possible worst-case distortion in terms of social welfare, and outperform the best-possible expected distortion achieved by randomized ordinal mechanisms.
\end{abstract}

\section{Introduction} \label{sec:intro}
The notion of  \emph{distortion} in social choice settings was defined to capture the loss in aggregate objectives due to the lack of precise information about the preferences of the participants~\citep{procaccia2006distortion}. 
More concretely, the distortion was originally defined as a measure of the deterioration of the total happiness of the agents when access is  given only to the (ordinal) preference rankings of the agents, rather than to the complete numerical (cardinal) information about their preferences. 
This research agenda has successfully been applied to a plethora of different settings, giving rise to a rich and vibrant line of work in major venues at the intersection of computer science and economics. For a comprehensive overview, see the survey of \citet{survey2021}.

Out of all of these scenarios, some of the most fundamental are \emph{matching} problems, in which agents are matched to items or other agents, aiming to maximize the \emph{social welfare} of the matching (the total value of the agents). An example is the classic \emph{One-Sided Matching} setting \citep{hylland1979efficient}, where the goal is to match $n$ items to $n$ agents based on the preferences of the agents over the items. For this setting, \cite{Aris14} showed that the best achievable distortion is $\Theta(\sqrt{n})$. Importantly, this guarantee is only possible if one is allowed to use \emph{randomization} and the values of the agents are \emph{normalized}.\footnote{Note that if any of these assumptions is relaxed, it is impossible to achieve sub-linear distortion using only ordinal information.}

Moving on from merely preference rankings, \citet{amanatidis2020peeking} recently put forward the agenda of studying the tradeoffs between information and efficiency, when the employed mechanisms are equipped with the capability of learning the values of the agents via \emph{queries}. The rationale is that asking the agents for more detailed information about only \emph{a few} options is still cognitively not too burdensome, and could result in notable improvements on the distortion. This was indeed confirmed in that work for general social choice, and in a follow-up work for several matching problems \citep{Amanatidis2021matching}. Specifically, the latter work shows that it is possible to obtain distortion $O(n^{1/k})$ with $O(\log n)$ queries per agent for any constant integer $k$, and distortion $O(1)$ with $O(\log^2 n)$ queries per agent. Crucially, the mechanisms achieving these bounds {do not use randomization nor demand the values to be normalized}.

While these works make a significant first step, they leave some important questions unanswered. The mechanisms they propose require a logarithmic number of queries to achieve \emph{any} significant improvement. Answering that many queries might still be cognitively too demanding for the agents, especially when there is a large number of possible options. The main high-level motivation of this research agenda is that a small amount of information can be more valuable than randomization. But what does really constitute a ``small amount''? Ideally, we would like to design mechanisms that make only a few queries per agent, independently of the size of the input parameters. Since with a single query, sub-linear distortion bounds are not possible \citep{Amanatidis2021matching,amanatidis2020peeking}, the first fundamental question that we would like to answer is the following:

\begin{quote}
\emph{What is the best achievable distortion when we can only ask \emph{two} queries per agent?}
\end{quote}

\subsection{Results and Technical Overview} \label{sec:contribution}
We settle the aforementioned question for several matching problems, including \emph{One-Sided Matching}, \emph{Two-Sided Matching}, \emph{General Graph Matching}, and other more general graph-theoretic problems. 
For all matching problems considered, we show that there is a deterministic mechanism that makes two queries per agent, runs  in polynomial time, and achieves a distortion of $O(\sqrt{n})$. This upper bound is based on a novel mechanism, which we call \textsc{Match-TwoQueries} in the case of One-Sided Matching (see Mechanism~\ref{alg:Match-TwoQueries}). The mechanism asks two queries per agent and computes a maximum-weight matching based of the revealed values due to these queries. It starts by querying the agents at the first position of their preference rankings. For the second query, it computes a certain type of assignment $A$ of agents to items (or agents to agents in more general matching problems), to which we refer as a \emph{sufficiently representative assignment}, and queries the agents about the items they are assigned to in $A$. The existence of such an assignment for all instances is far from trivial, and one of our main technical contributions is to show its existence and efficient computation for the wide range of problems we consider.  


We also show that no deterministic mechanism for these settings that makes two queries per agent can achieve a distortion better than $\Omega(\sqrt{n})$. This lower bound follows by a more general construction yielding a lower bound of $\Omega(n^{1/\lambda})$ on the distortion of any mechanism that makes a constant number $\lambda$ of queries for any of these mechanisms. This mirrors the corresponding lower bounds of \cite{Amanatidis2021matching} for One-Sided Matching.

While our results apply to general matching settings, their most impressive implications are for One-Sided Matching: 
We show that by using only \emph{two} cardinal queries per agent, we can match the bound of $\Theta(\sqrt{n})$ for purely ordinal mechanisms, \emph{without requiring randomization or any normalization}. \textsc{Match-TwoQueries} clearly also outperforms another mechanism of \cite{Amanatidis2021matching}, which uses two queries and achieves a distortion of ${O}(n^{2/3} \sqrt{\log{n}})$ assuming that the values of each agent sum up to $1$. In contrast, our mechanism works for unrestricted values, and achieves the best possible distortion of $O(\sqrt{n})$ based on conceptually much simpler ideas. 

\paragraph{Results for general social choice.} 
Given that our approach works for a wide variety of matching problems, one might be curious as to whether similar arguments could be used to show bounds for the general social choice setting, where $n$ agents have preferences over $m$ alternatives, and the goal is to select an alternative with high social welfare; this was after all the original setting that \citet{amanatidis2020peeking} studied in the introduction of the information-distortion tradeoff research agenda. In this setting, the situation is quite similar: the upper bounds follow by mechanisms that ask $O(\log m)$ queries, and nothing positive is known for smaller numbers of queries. 

We show that a mechanism with structure similar to that of \textsc{Match-TwoQueries} can indeed achieve a distortion of $O(\sqrt{m})$ using only two queries, subject to being able to compute a sufficiently representative set of alternatives, which is analogous of the sufficiently representative assignment in matching problems. It turns out that this property is very closely connected to the notion of an \emph{(approximately) stable committee} \citep{jiang2020approximately,cheng2020group}, and it follows that it exists when $m = \Omega(n)$, thus allowing us to obtain the desired bound of $O(\sqrt{m})$ when this is true. This case is quite natural, as it captures instances where a group of people need to decide over a large set of possible options (e.g., shortlisting candidates for a job, deciding the best paper for a conference, etc.). Interestingly, in contrast to the matching setting for which we show that sufficiently representative assignments can be found via a simple greedy algorithm, computing sufficiently representative sets of alternatives in general social choice requires rather involved techniques~\citep{jiang2020approximately,cheng2020group}. An obvious open question here is whether the $O(\sqrt{m})$ bound can also be achieved by asking only two queries when $m=o(n)$. This seems to be a more challenging task to prove; we discuss it further in Section~\ref{sec:open}.


We also show that the bound of $O(\sqrt{m})$ is the best possible, as part of a more general distortion lower bound of $\Omega(m^{1/\lambda})$ for mechanisms that make a constant number $\lambda$ of queries per agent; the latter result significantly improves the previously known lower bound of $\Omega(m^{1/2(\lambda+1)})$ \citep{amanatidis2020peeking}. 


\paragraph{Roadmap.}
For the sake of presentation, we fully demonstrate how our methodology works for the {\em One-Sided Matching} problem in Section~\ref{sec:matching-2-queries}. Before doing so, we start with some necessary notation and terminology in Section~\ref{sec:prelim}. In Section~\ref{sec:generalizations}, we briefly discuss other graph-theoretic problems for which our methodology can be applied.
Our results for general social choice are presented in Section~\ref{sec:social-choice}. We conclude with some interesting open problems in Section~\ref{sec:open}.

\subsection{Additional Related Work}
The literature on the distortion of ordinal mechanisms in social choice is long and extensive, focusing primarily on settings with normalized utilities (e.g., \citep{boutilier2015optimal,ebadian2022optimized,caragiannis2017subset,ratsikas2020distributed}), or with metric preferences (e.g., \citep{anshelevich2018approximating,anshelevich2017randomized,anshelevich2022distributed,CSV22,charikar2022randomized,gkatzelis2020resolving}); see the survey of \citet{survey2021} for a detailed exposition. 
The distortion of mechanisms for One-Sided Matching and more general graph-theoretic problems has been studied in a series of works for a variety of preference models, but solely with ordinal information \citep{anshelevich2018ordinal,abramowitz2017utilitarians,anshelevich2017tradeoffs,anshelevich2016blind,Aris14,caragiannis2016truthful}.

Besides the papers of \citet{amanatidis2020peeking,Amanatidis2021matching}, the effect of limited cardinal information on the distortion has also been studied in other works~\citep{abramowitz2019awareness,mandalefficient,mandal2020optimal,benade2017preference}. Mostly related to us is the paper of \citet{ma2021matching} which considered the One-Sided Matching problem with a different type of cardinal queries, and showed qualitatively similar results to \cite{Amanatidis2021matching} for Pareto optimality (rather than social welfare).

Our upper bound for the general social choice setting makes use of the results of \citet{cheng2020group} and \citet{jiang2020approximately} for (approximately) stable committees (see also \citet{aziz2017condorcet}); a stable committee is very similar to a representative set of alternatives in our terminology. \citet{cheng2020group} showed that, while \emph{exactly} stable committees do not always exist \citep{jiang2020approximately}, finding a random version of such committees, coined \emph{stable lotteries}, is always possible and can be done in polynomial time. Later on, \citet{jiang2020approximately} showed that, via an intricate derandomization process, stable lotteries can yield approximately stable committees, where the approximation is a small multiplicative constant; for our purposes, this is sufficient. Interestingly, very recently, \citet{ebadian2022optimized} used stable lotteries to construct a purely ordinal randomized social choice mechanism that achieves the best possible distortion under unit-sum normalized values.

\section{Preliminaries on One-Sided Matching, Mechanisms, and Distortion} \label{sec:prelim}
In One-Sided Matching, there is a set $\mathcal{N}$ of $n$ {\em agents} and a set $\mathcal{A}$ of $n$ {\em items}. Each agent $i \in \mathcal{N}$ has a {\em value} $v_{i,j}$ for each item $j \in \mathcal{A}$; we refer to the matrix $\boldv=(v_{i,j})_{i \in \mathcal{N}, j \in \mathcal{A}}$ as the {\em valuation profile}. A {\em (one-sided) matching} $X: \mathcal{N} \rightarrow \mathcal{A}$ is a bijection from the set of agents to the set of items, i.e.,  each agent is \emph{matched} to a different single  item. Our goal is to choose a matching $X$ to maximize the {\em social welfare}, defined as the total value of the agents for the items they have been matched to according to $X$: 
$\SW(X| \boldv) = \sum_{i \in \mathcal{N}} v_{i,X(i)}.$
Usually $\boldv$ is clear from the context, so we then simplify our notation to $\SW(X)$ for the social welfare of matching $X$.

As in most of the related literature, we assume that we do not have access to the valuation profile of the agents. Instead, we have access to the {\em ordinal preference} $\succ_i$ of each agent $i$, which is derived from the values of the agent for the items, such that $a \succ_i b$ if $v_{i,a} \geq v_{i,b}$; we refer to the vector $\sucv = (\succ_i)_{i \in \mathcal{N}}$ as the {\em ordinal profile} of the agents.

\smallskip

A {\em mechanism} $\mathcal{M}$ in our setting operates as follows: 
It takes as input the ordinal profile $\sucv$ of the agents.
It then makes a number $\lambda \geq 1$ of {\em queries} per agent to learn part of the valuation profile. In particular, each agent is asked her value for at most $\lambda$ items. 
Given the answers to the queries, and also using the ordinal profile, $\mathcal{M}$ computes a feasible solution (here a matching) $\mathcal{M}(\sucv)$. 

In this paper we focus on mechanisms that make two queries per agent, i.e., $\lambda=2$, and compute a solution of high social welfare. However, pinpointing an (approximately) optimal solution without having full access to the valuation profile of the agents can be quite challenging; the ordinal profile may be consistent with a huge number of different valuation profiles, even after the queries. Nevertheless, we aim to achieve the best asymptotic performance possible, as quantified by the notion of {\em distortion}.

\begin{definition}\label{def:distortion}
The {\em distortion} of a mechanism $\mathcal{M}$ is the worst-case ratio (over the set $\mathcal{V}$ of all valuation profiles in instances with $n$ agents and $n$ items) 
between the optimal social welfare and the social welfare of the solution chosen by $\mathcal{M}$:
\begin{align*}
\dist(\mathcal{M}) = \sup_{\boldv \in \mathcal{V}, |\mathcal{N}|=n, |\mathcal{A}|=n} \frac{\max_{X \in \mathcal{X}} \SW(X | \boldv)}{ \SW(\mathcal{M}(\sucv) | \boldv)},
\end{align*}
where $\mathcal{X}$ is the set of all matchings between $\mathcal{N}$ and $\mathcal{A}$.
\end{definition}

\section{An Optimal Two-Query Mechanism} \label{sec:matching-2-queries}
In this section, we present a mechanism for One-Sided Matching that makes two queries per agent and achieves a distortion of $O(\sqrt n)$. 
Due to the lower bound of $\Omega(n^{1/\lambda})$ on the distortion of any mechanism that can make up to $\lambda$ queries per agent shown by \citet{Amanatidis2021matching}, our mechanism is asymptotically best possible when $\lambda=2$. 

Without any normalization assumptions about the valuation functions, it is easy to see that a mechanism cannot have \emph{any} guarantee unless it queries every agent about her favorite item. However, there are no obvious criteria suggesting how to use the \emph{second} query. Before we present the details of our mechanism, we define a particular type of assignment of agents to items that will be critical for deciding where to make the {second} queries. 

\begin{definition} \label{def:nia}
A many-to-one assignment $A$ of agents to items (i.e., each agent is assigned to one item, but multiple agents may be assigned to the same item) is a \emph{sufficiently representative assignment} if
(a) For every item $j \in \mathcal{A}$, there are at most $\sqrt{n}$ agents assigned to $j$;
(b) For any matching $X$, there are at most $\sqrt{n}$ agents that prefer the item they are matched to in $X$ to the item they are assigned to in $A$.
\end{definition}
A natural question at this point is whether a sufficiently representative assignment exists for any instance, and if so, whether it can be efficiently computed. In Section \ref{sec:NIA}, we present a simple polynomial-time algorithm for this task.

\subsection{The Mechanism}\label{sec:TheMechanism}
Our mechanism \textsc{\sc Match-TwoQueries} (Mechanism~\ref{alg:Match-TwoQueries}) first queries every agent  about her  favorite item. Next, it computes a sufficiently representative assignment $A$ (see Section \ref{sec:NIA}) and queries each agent about the item she is assigned to in $A$. Finally, it outputs a matching that maximizes the social welfare based \emph{only} on the revealed values (all other values are set to $0$).
Although computational efficiency is not our primary focus here, if we use a polynomial-time algorithm for computing a maximum weight matching (e.g., the Hungarian method \citep{Kuhn56}), \textsc{\sc Match-TwoQueries} runs in polynomial time as well. 

\begin{algorithm}[!ht]
{\fontsize{10}{12}\selectfont
\begin{algorithmic}[1]
	\State Query each $i\in \mathcal{N}$ about her favorite item w.r.t.~$\succ_i$
	\State Compute a \emph{sufficiently representative assignment} $A$
	\State Query each agent about the item she is assigned to in $A$
	\State Set all non-revealed values to $0$
	\State \Return a maximum-weight perfect matching $Y$
\end{algorithmic}
}
\caption{\textsc{\sc Match-TwoQueries}$(\mathcal{N}, \mathcal{A}, \sucv)$} \label{alg:Match-TwoQueries}
\end{algorithm} 

Of course, if we compare the mechanism's behaviour to an actual optimal matching $X$, we expect to see that we asked agents about the ``wrong'' items most of the time: for many agents the second query is about better items than what they are matched to in $X$, and for many agents it is about worse items.  
The desired bound of $O(\sqrt{n})$ on the distortion of {\sc Match-TwoQueries} is established by balancing the loss due to each of these two cases.

\begin{theorem}\label{thm:two-queries-matching}
{\sc Match-TwoQueries} has distortion $O(\sqrt{n})$.
\end{theorem}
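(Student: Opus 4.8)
The plan is to bound the social welfare of the matching $Y$ returned by the mechanism from below by a constant fraction of $\mathrm{OPT}/\sqrt{n}$, where $\mathrm{OPT} = \SW(X)$ for an optimal matching $X$. Since $Y$ is a maximum-weight perfect matching over the revealed values, it suffices to exhibit *some* perfect matching $Z$ supported on revealed entries with $\SW(Z) \ge \Omega(\mathrm{OPT}/\sqrt{n})$; then $\SW(Y) \ge \SW(Z)$. The revealed entries per agent $i$ are exactly: her favourite item $f(i)$ (from the first query) and the item $A(i)$ she is assigned to in the sufficiently representative assignment (from the second query). The first step is therefore to split the agent set $\mathcal{N}$ according to the optimal matching $X$ into $G = \{ i : v_{i,X(i)} \le v_{i,A(i)}\}$ (agents for whom the second query is "at least as good as" their optimal item) and $B = \mathcal{N}\setminus G$ (agents who strictly prefer their $X$-item to their $A$-item). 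By the second property in Definition~\ref{def:nia}, $|B| \le \sqrt{n}$.

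For the "bad" agents $B$: each such agent has $X(i) \succ_i A(i)$, so in particular $v_{i,X(i)} \ge v_{i, f(i)}$ only fails — wait, actually $f(i)$ is the global favourite, so $v_{i,f(i)} \ge v_{i,X(i)}$. Hence $\sum_{i\in B} v_{i,f(i)} \ge \sum_{i \in B} v_{i,X(i)}$. The idea is to extract this mass via a matching that sends (some of) the agents in $B$ to their favourite items. The obstacle is that many agents may share the same favourite item, so we cannot match all of $B$ to their favourites simultaneously. Here I would invoke the first property of Definition~\ref{def:nia} — but that bounds how many agents are assigned to a given item in $A$, not how many share a favourite. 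Instead the right tool is a greedy/pigeonhole argument on $B$: since $|B|\le \sqrt n$, among the agents in $B$ the one, say $i^\star$, with the largest value $v_{i^\star, f(i^\star)}$ satisfies $v_{i^\star,f(i^\star)} \ge \frac{1}{\sqrt n}\sum_{i\in B} v_{i,f(i)} \ge \frac{1}{\sqrt n}\sum_{i\in B} v_{i,X(i)}$. Matching $i^\star$ to $f(i^\star)$ and every other agent arbitrarily to remaining items (contributing $\ge 0$) gives a revealed-entry matching capturing the $B$-part of $\mathrm{OPT}$ up to a $\sqrt n$ factor.

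For the "good" agents $G$: here $v_{i,A(i)} \ge v_{i,X(i)}$ for every $i\in G$, so if we could match every agent in $G$ to $A(i)$ we would already beat $\sum_{i\in G} v_{i,X(i)}$. The catch is that $A$ is many-to-one, so this is not a matching. By the first property of Definition~\ref{def:nia}, at most $\sqrt n$ agents are assigned to any single item, so the conflict graph has bounded "load" $\sqrt n$; thus we can select, for each item $j$, the single agent in $A^{-1}(j)\cap G$ with the largest $v_{i,A(i)}$, obtaining a partial matching that, on each item-class, retains at least a $1/\sqrt n$ fraction of $\sum_{i\in A^{-1}(j)\cap G} v_{i,A(i)} \ge \sum_{i\in A^{-1}(j)\cap G} v_{i,X(i)}$. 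Summing over all items $j$ and extending arbitrarily to a perfect matching (remaining agents contribute $\ge 0$) yields a revealed-entry matching $Z_G$ with $\SW(Z_G) \ge \frac{1}{\sqrt n}\sum_{i\in G} v_{i,X(i)}$.

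Finally, combine: one of $\sum_{i\in G} v_{i,X(i)}$ and $\sum_{i\in B} v_{i,X(i)}$ is at least $\mathrm{OPT}/2$, so taking the better of the two matchings constructed above gives a perfect matching on revealed entries of social welfare at least $\frac{1}{2\sqrt n}\mathrm{OPT}$, whence $\SW(Y) \ge \frac{1}{2\sqrt n}\mathrm{OPT}$ and $\dist(\textsc{Match-TwoQueries}) = O(\sqrt n)$. I expect the main subtlety to be the bookkeeping that ensures the two partial matchings can each be extended to *perfect* matchings using only revealed entries — one can always fill in the unmatched agents with leftover items at value $0$ since the mechanism zeroes out non-revealed values, but one must check the counts line up (equal numbers of unmatched agents and unmatched items), which is immediate since we start from a perfect matching $X$ and only ever drop agent–item pairs. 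A secondary point is making sure the "largest element is at least average" step is applied to sets of size at most $\sqrt n$, which is exactly what Definition~\ref{def:nia} guarantees in both cases.
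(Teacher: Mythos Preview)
Your proposal is correct and follows essentially the same approach as the paper: your split into $G$ and $B$ is exactly the paper's partition into $S^{\geq}$ and $S^{<}$, and both arguments use the two properties of a sufficiently representative assignment in the same way (at most $\sqrt{n}$ agents per item to handle $G$, at most $\sqrt{n}$ agents in $B$ total). The only cosmetic difference is that the paper bounds $\SW_C(X)$ directly in terms of $\SW_R(Y)$ rather than exhibiting explicit candidate matchings, but the underlying inequalities are identical.
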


\begin{proof}
Consider any instance with valuation profile $\bv$. 
Let $Y$ be the matching computed by the {\sc Match-TwoQueries} mechanism when given as input the ordinal profile $\sucv$, and let $X$ be an optimal matching. Let $\SW_R(Y)$ be the \emph{revealed} social welfare of $Y$, i.e., the total value of the agents for the items they are matched to in $Y$ and for which they {\em were queried} about. We will show that $\SW(X) \leq (1 + 2\sqrt{n} ) \cdot \SW_R(Y)$, and then use the fact that $\SW(Y) \geq \SW_R(Y)$ to directly get the desired bound on the distortion.

We can write the optimal social welfare as
\[
\SW(X) = \SW_R(X) + \SW_C(X),
\]
where $\SW_{R}(X)$ is the revealed social welfare of $X$ that takes into consideration only the values revealed by the queries, whereas $\SW_{C}(X)$ is the \emph{concealed} social welfare of $X$ that takes into consideration only the values not revealed by any queries.
Since $Y$ is the matching that maximizes the social welfare based only on the revealed values, we have that 
\begin{align}\label{eq:matching-revealed}
\SW_R(X) \leq \SW_R(Y).
\end{align}
To bound the quantity $\SW_C(X)$, let $S$ be the set of agents who are not queried about the items they are matched to in $X$. We partition $S$ into the following two subsets consisting of agents for whom the second query of the mechanism is used to ask about items that the agents consider {\em better} or {\em worse} than the items they are matched to in $X$. Recall that an agent $i$ is queried about the item $A(i)$ she is assigned to according to the sufficiently representative assignment $A$. So, $S$ is partitioned into
\begin{equation*}
\begin{aligned}
S^{\geq} &= \left\{ i \in S: v_{i,A(i)} \geq v_{i,X(i)}\right\}, \text{\quad and \quad}
S^{<} &= \left\{ i \in S: v_{i,A(i)} < v_{i,X(i)}\right\}.
\end{aligned}
\end{equation*}
Given these sets, we can now write
\[\SW_C(X) = \SW_C^{\geq}(X) + \SW_C^{<}(X),\]
where
\[\SW_C^{\geq}(X) = \sum_{i \in S^{\geq}}v_{i,X(i)}\] and
\[\SW_C^{<}(X) = \sum_{i \in S^{<}}v_{i,X(i)}.\]

For every item $j$, let $S_j^{\geq} = \{ i \in S^{\geq}: A(i)=j \}$ be the set of all agents in $S^{\geq}$ that are queried about $j$ by the mechanism using the second query. Thus, $S^{\geq} = \bigcup_{j \in \mathcal{A}}S_j^{\geq}$.
Since $A$ is a sufficiently representative assignment, $|S_j^{\geq}| \leq \sqrt{n}$ for every item $j$. Therefore, 
\begin{align} \label{eq:matching-concealed-geq}
\SW_C^{\geq}(X) 
&= \sum_{j \in \mathcal{A}} \sum_{i \in S_j^{\geq}} v_{i,X(i)} 
\leq \sum_{j \in \mathcal{A}} \sum_{i \in S_j^{\geq}} v_{i,j} \nonumber\\ 
&\leq \sum_{j\in \mathcal{A}} |S_j^{\geq}| \cdot \max_{i \in S_j^{\geq}} v_{i,j}   
\leq \sqrt{n} \sum_{j\in \mathcal{A}} \max_{i \in S_j^{\geq}} v_{i,j}  \nonumber\\ 
& \leq \sqrt{n} \cdot \SW_R(Y).
\end{align}
For the last inequality, recall that $A$ assigns every agent to a single item, and thus the sets $S_j^{\geq}$ are disjoint.
In addition, the values of all the agents for the items they are matched to according to $A$ are revealed by the second query of the mechanism. Since we can match the agent in $S_j^{\geq}$ that has the maximum value for $j$ to $j$, and $Y$ maximizes the social welfare based on the revealed values, we obtain that $\SW_R(Y) \geq \sum_{j\in \mathcal{A}} \max_{i \in S_j^{\geq}} v_{i,j}$.

Next consider the quantity $\SW_C^{<}(X)$. By the fact that $A$ is a sufficiently representative assignment, it follows that $|S^{<}| \leq \sqrt{n}$; otherwise $X$ would constitute a matching for which there are strictly more than $\sqrt{n}$ agents that prefer the  item they are matched to in $X$ to the item they are assigned to by $A$. Combined with the fact that all agents are queried at the first position of their ordinal preferences, we obtain
\begin{align} \label{eq:matching-concealed-<} 
\SW_C^{<}(X) 
&= \sum_{i \in S^{<}} v_{i,X(i)} 
\leq \sum_{i \in S^{<}} \max_{j \in \mathcal{A}} v_{i,j} \nonumber \\
&\leq |S^{<}| \cdot \max_{i \in S^{<}} \max_{j \in \mathcal{A}} v_{i,j} \nonumber \\
&\leq \sqrt{n} \cdot \SW_R(Y).
\end{align}

The bound follows directly by \eqref{eq:matching-revealed}, \eqref{eq:matching-concealed-geq} and \eqref{eq:matching-concealed-<}.
\end{proof}

\subsection{Computing Sufficiently Representative Assignments}\label{sec:NIA}

To establish the correctness of {\sc Match-TwoQueries}, we need to ensure that a sufficiently representative assignment exists for any ordinal profile and that it can be computed efficiently. 
For this we present a simple polynomial time algorithm, which we call {\sc $\sqrt{n}$-Serial Dictatorship} (Mechanism~\ref{alg:SD-sqrt}). This algorithm creates $\sqrt{n}$ copies of each item and then runs a serial dictatorship algorithm, which first fixes an ordering of the agents and then assigns each agent to her most preferred available item according to her ordinal preference. It is easy to see that the running time of {\sc $\sqrt{n}$-Serial Dictatorship} is polynomial (in particular, it is $O(n^{1.5})$).

\begin{algorithm}[h!]
{\fontsize{10}{12}\selectfont
\begin{algorithmic}[1]
 \State Let $\mathcal{B}$ be a multiset containing $\sqrt{n}$ copies of each  $j\in\mathcal{A}$ \;
  \For {\normalfont every agent $i \in \mathcal{N}$}
	   \State Let $\alpha_i$ be a most preferred item of agent $i$ in $\mathcal{B}$ \;
	   \State Remove $\alpha_i$ from $\mathcal{B}$ \;
   \EndFor\label{blah}
\State \Return $A = (\alpha_i)_{i \in \mathcal{N}}$
\end{algorithmic}
}
\caption{\textsc{\sc $\sqrt{n}$-Serial Dictatorship}$(\mathcal{N}, \mathcal{A}, \sucv)$} 
\label{alg:SD-sqrt}
\end{algorithm}

\begin{theorem}\label{thm:matching-GAGA}
For any instance, the output of {\sc $\sqrt{n}$-Serial Dictatorship} is a sufficiently representative assignment.
\end{theorem}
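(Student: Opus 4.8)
The plan is to verify the two defining properties of a sufficiently representative assignment (Definition~\ref{def:nia}) directly from the structure of the serial dictatorship run on the multiset $\mathcal{B}$ of $\sqrt{n}$ copies of each item. The first property is immediate: since $\mathcal{B}$ contains exactly $\sqrt{n}$ copies of each item $j$, and each agent removes exactly one copy from $\mathcal{B}$, no item can be assigned to more than $\sqrt{n}$ agents. (I should be mildly careful about $\sqrt{n}$ not being an integer, but that is a cosmetic rounding issue — one can take $\lceil\sqrt{n}\rceil$ copies throughout, and the bounds only improve.)

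The substance is in the second property. Fix an arbitrary matching $X$ and let $T = \{ i \in \mathcal{N} : v_{i,X(i)} > v_{i,A(i)}\}$ be the set of agents who strictly prefer their $X$-partner to their $A$-assignment; I must show $|T| \le \sqrt{n}$. The key observation is about what serial dictatorship guarantees: when agent $i$ is processed, she picks her most preferred item still available in $\mathcal{B}$, so every item she strictly prefers to $\alpha_i = A(i)$ must have had all $\sqrt{n}$ of its copies already removed by agents processed before $i$. In particular, for each $i \in T$, the item $X(i)$ is strictly preferred by $i$ to $A(i)$, so all $\sqrt{n}$ copies of $X(i)$ were exhausted by the time $i$'s turn came.

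Now I would derive a contradiction from assuming $|T| \ge \sqrt{n} + 1$. Consider the agent $i^\star \in T$ who is processed \emph{last} among all agents in $T$. For this agent, item $X(i^\star)$ had all its $\sqrt{n}$ copies removed before $i^\star$'s turn. But the items $\{X(i) : i \in T\}$ are all distinct (since $X$ is a matching, hence injective), so $T$ "demands" $|T| \ge \sqrt{n}+1$ distinct items, each of which must be fully exhausted before the corresponding agent of $T$ is reached. Focusing on $i^\star$: every other agent $i \in T \setminus \{i^\star\}$ is processed before $i^\star$ and contributes the removal of at most one copy of one item; agents outside $T$ also remove copies. The cleanest version of the argument: the $\sqrt{n}$ copies of $X(i^\star)$ that are removed before $i^\star$'s turn are removed by $\sqrt{n}$ distinct agents, none of which is $i^\star$, and each such agent removed that copy because it was their most-preferred available item at their turn. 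I want to turn this into a counting statement forcing $\ge \sqrt{n}\cdot\sqrt{n} = n$ agents — but there are only $n$ agents total, and then pin down the strict inequality. Concretely: for each $i \in T$, let $P(i)$ be the set of (at least) $\sqrt{n}$ agents who removed the copies of $X(i)$ before $i$'s turn. These sets $P(i)$ are pairwise disjoint over $i\in T$ because the items $X(i)$ are distinct, so a single agent can be responsible for removing a copy of at most one of them. Hence $\sum_{i\in T} |P(i)| \ge |T|\cdot\sqrt{n}$, and since $\bigcup_{i\in T} P(i) \subseteq \mathcal{N}$ we get $|T|\cdot\sqrt{n} \le n$, i.e. $|T| \le \sqrt{n}$, as desired.

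The main obstacle I anticipate is getting the disjointness and the "processed before" bookkeeping exactly right — in particular making sure that an agent in $P(i)$ is genuinely excluded from being counted for a different $i'$, which relies on the fact that each agent removes exactly one copy of one item, and making sure that if $i \in T$ happens to itself lie in some $P(i')$ (it can!) this does not break the count. It does not, because the sets $P(i)$ partition a subset of \emph{all} agents, with no requirement that they avoid $T$; the only thing used is that they avoid \emph{each other}. A secondary subtlety is the non-integrality of $\sqrt{n}$, handled by the $\lceil\cdot\rceil$ convention noted above, which leaves both properties intact with the stated $\sqrt{n}$ bound up to the obvious rounding.
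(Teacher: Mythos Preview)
Your proposal is correct and takes essentially the same approach as the paper: the first property is immediate by construction, and for the second both arguments observe that every agent $i\in T$ forces her distinct item $X(i)$ to be exhausted, then count that exhausting $|T|$ distinct items requires $|T|\cdot\sqrt{n}\le n$ agents (the paper phrases this dually as ``at most $n/\sqrt{n}=\sqrt{n}$ items can be exhausted, so $|T|\le\sqrt{n}$''). Your detour through the last-processed agent $i^\star$ is unnecessary, as you yourself realize mid-proof, but the final disjoint-$P(i)$ counting is precisely the paper's argument viewed from the agent side rather than the item side.
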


\begin{proof}
Let $A$ be the output of the algorithm. During the execution of the algorithm, whenever every copy of an item has been assigned, we say that such an item is \emph{exhausted}. Assume, towards a contradiction, that $A$ is not a sufficiently representative assignment. By construction, every item is assigned to at most $\sqrt{n}$ agents, so there must be a matching violating the second condition of Definition \ref{def:nia}. That is, there is 
a subset of items $\mathcal{A}'$ and a subset of agents $\mathcal{N}'$, such that
 $|\mathcal{A}'| = |\mathcal{N}'| > \sqrt{n}$, and
each agent $i \in \mathcal{N}'$ prefers to be assigned to a distinct item $\beta_i \in \mathcal{A}'$ (i.e., $\beta_i\neq \beta_j$ for $i \neq j$) instead of the item she is assigned to in $A$. 

Consider any agent $i \in \mathcal{N}'$. 
The fact that this agent was not assigned to $\beta_i$ by the algorithm implies that when the agent was picked, item $\beta_i$ was exhausted. Since this is true for all agents in $\mathcal{N}'$, at the end of the algorithm all items of $\mathcal{A}'$ must be exhausted. However, an item is exhausted when all its $\sqrt{n}$ copies have been assigned and there are $n$ agents in total, so we can only have as many as ${n}/{\sqrt{n}} = \sqrt{n}$ exhausted items. This  means that $|\mathcal{A}'| \leq \sqrt{n}$, a contradiction.
\end{proof}

\section{Further Combinatorial Optimization Problems}\label{sec:generalizations}

The approach of Section \ref{sec:matching-2-queries} can be extended to a much broader class of graph-theoretic problems. 
Informally, our approach works when the objective is to maximize an additive function over subgraphs of a given graph which contain all ``small'' matchings and have constant maximum degree.
We make the space of feasible solutions more precise in the following definition. 
\begin{definition}\label{def:feasible_families}
Given a constant $k\in \mathbb{N}$ and a 
weighted graph $G$ on $n$ nodes, we say that a family $\mathcal{F}$ of subgraphs of $G$ is a \emph{matching extending $k$-family} if:
\begin{itemize}
    \item Graphs in $\mathcal{F}$ have maximum 
    degree at most $k$;
    \item For any matching $M$ of $G$ of size at most $\lfloor n/3k \rfloor$, there is a graph in $\mathcal{F}$ containing $M$.
\end{itemize}
\end{definition}
Clearly, the set of matchings of a graph (viewed as subgraphs rather than subsets of edges) is a matching extending $1$-family, but it is not hard to see that Definition \ref{def:feasible_families} captures  other constraints, like subgraphs that are unions of disjoint paths and cycles (matching extending $3$-family) or unions of disjoint cliques of size $k$ (matching extending $(k-1)$-family). 

We are ready to introduce the general full information optimization problem that we tackle here; we then move on to its social choice analog. 
As this is a special case of the class of problems captured by Max-on-Graphs (introduced by \citet{Amanatidis2021matching}), we use a similar formulation and name.
Note that, in the above definition, the family $\mathcal{F}$ is independent of the weights $w$. This is necessary as $w$ will be unknown in general.



\medskip
\noindent 
\textbf{\em $\bm{k}$-Max-on-Graphs}: Given a constant $k\in \mathbb{N}$, a 
weighted graph $G = (U, E, w)$, and a concise description of a matching extending $k$-family $\mathcal{F}$, find a solution $H^* \in \allowbreak \arg\max_{H\in\mathcal{F}} \sum_{e\in E(H)} w(e)$.

\medskip


\noindent One-Sided Matching, as studied in Section \ref{sec:matching-2-queries}, is the special case of $k$-Max-On-Graphs, where $G$ is the complete bipartite graph on the set of agents $\mathcal{N}$ and the set of items $\mathcal{A}$, the weight of an edge $\{i,j\}$ is the value $v_{i, j}$ of agent $i$ for item $j$, and $\mathcal{F}$ contains all the $1$-factors of $G$. Note that the weights of the graph in this case are defined in terms of the valuation functions of the agents. Moreover, recall that in our setting only the ordinal preferences of the agents are given and their cardinal values can be accessed only via queries; so, we do not know the weights that have not been revealed by a query. This is the case for all the problems we are interested in, and is captured by the next definition. To avoid unnecessary notation, items are modeled as dummy agents with all their cardinal values equal to $0$. In addition, we write $w(H) := \sum_{e\in E(H)} w(e)$.

\medskip

\noindent\textbf{\em Ordinal-$\bm{k}$-Max-on-Graphs}:
Fix a constant $k \in \mathbb{N}$ and let $\mathcal{N}$ be a set of $n$ agents. A 
weighted graph $G = (\mathbb{N}, E, w)$ is given \emph{without} its weights.
Every agent $i\in \mathcal{N}$ has a (private) valuation function $v_i: \mathbb{N} \rightarrow \mathbb{R}_{\geq 0}$, such that, for every $e = \{i,j\} \in E$,
\[
w(e) =
v_i(j) + v_j(i). 
\]
\noindent
We are also given an {\em ordinal profile} $\succ_\vv = (\succ_i)_{i \in \mathcal{N}}$ that is consistent to $\vv = (v_i)_{i \in \mathcal{N}}$, and
a concise description of a matching extending $k$-family $\mathcal{F}$.
The goal is to find $H^* \in \allowbreak \arg\max_{H\in\mathcal{F}} w(H)$.


Besides One-Sided Matching, a large number of problems that are relevant to computational social choice are captured by Ordinal-$k$-Max-on-Graphs. We give a few examples: 

\medskip


\noindent\emph{\textbf{General Graph Matching}}: Given a weighted graph $G = (U, E, w)$, find a matching of maximum weight, i.e., $\mathcal{F}$ contains the matchings of $G$ and clearly is a matching extending $1$-family.
In the social choice analog of the problem, $U=\mathcal{N}$. 

\medskip

\noindent\emph{\textbf{Two-Sided Matching}}:
This is a special case of General Graph Matching in which $G = (U_1\cup U_2, E, w)$ is a bipartite graph. It is an extensively studied problem in economics and computational social choice \citep{GaleShapley62,RothSotomayor92}.



\medskip


\noindent\emph{\textbf{$\bm{k}$-Clique Packing}}: Given a weighted complete graph $G = (U, E, w)$, where $|U| = n$ is a multiple of $k\ge 2$, the goal is to partition $U$ into $|U|/k$ clusters of size $k$ to maximize the total weight of the edges in the clusters. That is, $\mathcal{F}$ contains all spanning subgraphs of $G$ that are the union of cliques of size $k$. As claimed above, $\mathcal{F}$ is a matching extending $(k-1)$-family: clearly every graph in $\mathcal{F}$ has maximum degree $k-1$, and any matching of size $\lfloor n/(3(k-1)) \rfloor$ (which is less than $n/k$) can be extended to a graph in $\mathcal{F}$ by arbitrarily grouping each pair of matched nodes with $k-2$ unmatched nodes, and then arbitrarily grouping the remaining nodes $k$ at a time.

This problem generalizes General Graph Matching (for which $k=2$) and often is referred to as Max $n/k$-Sum Clustering in the literature; see \citep{anshelevich2016blind}. In its social choice analog, $U=\mathcal{N}$. 

\medskip

\noindent\emph{\textbf{General Graph $\bm{k}$-Matching}}: Given a weighted graph $G = (U, E, w)$, find a $k$-matching of maximum weight, i.e., $\mathcal{F}$ contains all the subgraphs of $G$ where each node has degree at most $k$. As $\mathcal{F}$ already contains all matchings of $G$ of any size, it is straightforward that it is a matching extending $k$-family.
In the social choice analog of the problem, $U=\mathcal{N}$. 

\medskip

\noindent\emph{\textbf{$\bm{k}$-Constrained Resource Allocation}}: Given a bipartite weighted graph $G = (U_1\cup U_2, E, w)$, the goal is to assign at most $k$ nodes of $U_2$ to each node in $U_1$ so that the total weight of the corresponding edges is maximized. That is, $\mathcal{F}$ contains the subgraphs of $G$ where each node in $U_1$ has degree at most $k$ and each node in $U_2$ has degree at most $1$. Again, $\mathcal{F}$ already contains all matchings of $G$ of any size, so it is a matching extending $k$-family.

This problem generalizes One-Sided Matching. In its social choice analog, $\mathcal{N}=U_1 \cup U_2$ is partitioned into the ``actual agents'' $\mathcal{N}_1 = U_1$ and the ``items'' $\mathcal{N}_2 = U_2$, and 
$v_i(j)$ can be strictly positive only for $i \in \mathcal{N}_1, j\in \mathcal{N}_2$.

\medskip

\noindent\emph{\textbf{Short Cycle Packing}}: Given an integer $\ell$ and a weighted complete graph $G = (U, E, w)$, the goal is to find a collection of node-disjoint cycles of length at most $\ell$ so that their total weight is maximized.
Here, $\mathcal{F}$ contains any such collection of short cycles. Arguing as in ${k}$-Clique Packing, it is straightforward to see that $\mathcal{F}$ is a matching extending $(\ell-1)$-family (although it is not hard to show that it is actually a matching extending $3$-family).
The social choice analog of the problem has $U=\mathcal{N}$, and is closely related to Clearing Kidney $\ell$-Exchanges \citep{abraham2007clearing}.



\medskip

It is straightforward to extend the notion of distortion (Definition~\ref{def:distortion}) for Ordinal-$\bm{k}$-Max-on-Graphs by taking the supremum over all instances of a certain size $n$ and letting $\mathcal{X}$ be the set of feasible solutions of each instance.

As already discussed in the Introduction, for One-Sided Matching, \citet{Amanatidis2021matching} showed a lower bound of $\Omega(n^{1/\lambda})$ on the distortion of all deterministic mechanisms that can make up to $\lambda\geq 1$ queries per agent. Using this, we can get the analogous result for all the aforementioned problems. Although for some of them, like Two-Sided Matching and General Graph Matching, the lower bound is immediate, here we show it for \emph{any} problem captured by Ordinal-${k}$-Max-on-Graphs. For the statement of the theorem, $k\in \mathbb{N}$ is a constant, and we assume that for every graph $G$ a matching extending $k$-family $\mathcal{F}(G)$ is specified.


\begin{customthm}{3}
\label{thm:general-lower}
No deterministic mechanism using at most $\lambda \geq 1$ queries per agent can achieve a distortion better than $\Omega(n^{1/\lambda})$ for Ordinal-${k}$-Max-on-Graphs with feasible solutions given by $\mathcal{F}(\,\cdot\,)$.
\end{customthm}
\vspace{-7pt}
\begin{proof}
We are going to show that if we had a deterministic mechanism $\mathcal{M}$ for Ordinal-${k}$-Max-on-Graphs that makes at most $\lambda \geq 1$ queries per agent and achieves distortion $o(n^{1/\lambda})$, then we could design a deterministic mechanism for One-Sided Matching that also makes at most $\lambda$ queries per agent and has distortion $o(n^{1/\lambda})$. As the latter is impossible~\citep{Amanatidis2021matching}, that would imply that the lower bound applies to Ordinal-${k}$-Max-on-Graphs as well.



Let $(\mathcal{N}, \mathcal{A}, \sucv)$ be an arbitrary instance of One-Sided Matching with $|\mathcal{N}|=|\mathcal{A}|=n$ and underlying weights defined by $\boldv=(v_{i,j})_{i \in \mathcal{N}, j \in \mathcal{A}}$. We essentially use the same instance for Ordinal-${k}$-Max-on-Graphs: 
A complete bipartite weighted graph $G = (U_1\cup U_2, E, w)$ with $U_1=\mathcal{N}$, $U_2 = \mathcal{A}$, and valuation functions defined as $u_i(j) = v_{i, j}$ and  $u_j(i) = 0$ for every  $i \in \mathcal{N}, j \in \mathcal{A}$; the induced ordinal profile is also well-defined. 
Clearly, the matchings in the two instances are exactly the same and have the same weight (although they may not be feasible with respect to $\mathcal{F}(G)$). 
However, the feasible solutions for Ordinal-${k}$-Max-on-Graphs include subgraphs where the nodes may have degree up to $k$ instead of $1$. We need to relate the weight of an (approximately) optimal solution for the Ordinal-${k}$-Max-on-Graphs instance to the value of an optimal matching for the One-Sided Matching instance.

Let $M$ be a maximum weight matching in $G$ (and thus a maximum-value matching for the original One-Sided Matching instance) and $H\in \mathcal{F}(G)$ be an optimal solution. Consider the submatching $\hat{M}$ of $M$ that uses the $\lfloor 2n/3k \rfloor$ heaviest edges of $M$. 
Using the fact that $\lfloor x \rfloor \ge x/2$ for $x\ge 1$, we get
\begin{align} \label{eq:heavy-submatching}
w(\hat{M})
&\ge \frac{\lfloor 2n/3k \rfloor}{n} w(M)
\ge \frac{1}{3k} w(M).
\end{align}
Since $\mathcal{F}(G)$ is a matching extending $k$-family and $\hat{M}$ is sufficiently small (since $|V(G)|=2n$), there is some $\hat{H}\in \mathcal{F}(G)$ such that $\hat{M}$ is a subgraph of $\hat{H}$. As $H$ is a maximum-weight element of $\mathcal{F}(G)$, we directly get $w(H) \ge w(\hat{H}) \ge w(\hat{M})$ and, combining with \eqref{eq:heavy-submatching}, we have 
\begin{align} \label{eq:heavy-solution-H}
w(H) \ge \frac{1}{3k} w(M).
\end{align}

Now, if $H'\in \mathcal{F}(G)$ is an $\alpha$-approximate solution to the same Ordinal-${k}$-Max-on-Graphs instance, then \eqref{eq:heavy-solution-H} implies 
\begin{align} \label{eq:heavy-approx-solution}
w(H') \ge \frac{1}{\alpha} w(H)\ge \frac{1}{3\alpha k} w(M).
\end{align}
We can construct a matching from $H'$ using only ordinal information. 
In particular, for each $i\in U_1$, among the edges in $H'$ that are incident to $i$, we keep the best one with respect to $\succ_i$. 
Of course, the resulting graph $H''$ may not be a matching, as each node in $U_2$ may still have degree up to $k$.
However, note that this process also keeps at least a $1/k$ fraction of the weight incident to each $i\in U_1$, and thus of the total weight. So, \eqref{eq:heavy-approx-solution} implies
\begin{align*}
w(H'')\ge \frac{1}{k} w(H') \ge \frac{1}{3\alpha k^2} w(M).
\end{align*}
We repeat the process for the remaining nodes: for each $j\in U_2$, we keep the best of its edges in $H''$ with respect to $\succ_j$. 
Now the resulting graph $M'$ \emph{is} a matching and has at least a $1/k$ fraction of the total weight of $H''$ and, thus, 
\[w(M')\ge \frac{1}{k} w(H'') \ge \frac{1}{3\alpha k^3} w(M).\]

If needed, we can extend $M'$ to a perfect matching $M''$ by arbitrarily matching the unmatched nodes of $U_1$ and $U_2$, and consider its analog in the original instance. Clearly,   
$w(M'')\ge w(M')$, and thus $M''$ is a $(3 \alpha k^3)$-approximate solution for the original One-Sided Matching instance. Therefore, if there existed a mechanism $\mathcal{M}$ with distortion $\alpha = o(n^{1/\lambda})$ for Ordinal-${k}$-Max-on-Graphs with feasible solutions given by $\mathcal{F}$, we could use it for the above instance to get $H'$ and then $M''$, which would have weight within a factor of $o(3 k^3 n^{1/\lambda})$ from a maximum weight matching. Since $k$ is a constant, this would imply a distortion of $o(n^{1/\lambda})$ for One-Sided Matching, a contradiction.
\end{proof}

We are particularly interested in the case of $\lambda = 2$. In the next two sections we are going to present a mechanism for this case, which is asymptotically optimal, namely it achieves distortion $O(\sqrt{n})$, matching the lower bound we just derived.

\subsection{Sufficiently Representative Assignments}\label{sec:gaga}
We now revisit the notion of a \emph{sufficiently representative assignment}. We will appropriately adjust it to refer to a single set of agents (which, for the case of One-Sided Marching, includes both the actual agents and the items), and also incorporates the parameter $k$ from the definition of Ordinal-$k$-Max-on-Graphs. 

\begin{definition} \label{def:gaga}
Given $\mathcal{N}_1, \mathcal{N}_2 \subseteq \mathcal{N}$ and $k \in \mathbb{N}$, a many-to-one assignment $A$ of agents in $\mathcal{N}_1$ to agents in $\mathcal{N}_2$ is an \emph{$(\mathcal{N}_1, \mathcal{N}_2,k)$-sufficiently representative assignment} if:
\begin{itemize}
\item For every agent $j \in \mathcal{N}_2$, there are at most $\sqrt{n}$ agents from $\mathcal{N}_1$ assigned to $j$;
\item 
For any bipartite graph $H$ with node set $\mathcal{N}_1 \cup \mathcal{N}_2$ and maximum degree $k$, there are at most $k\sqrt{n}$ agents in  $\mathcal{N}_1$ that prefer some of their neighbors in $H$ to the agent they are assigned to according to $A$.
\end{itemize}
\end{definition}

For One-Sided Matching and $k$-Constrained Resource Allocation, $\mathcal{N}_1$ is the set of actual agents and $\mathcal{N}_2$ is the set of items. In contrast, for all other problems considered here we have $\mathcal{N}_1= \mathcal{N}_2 = \mathcal{N}$.

Like we did in Section \ref{sec:NIA}, we need to show that an $(\mathcal{N}_1, \mathcal{N}_2,k)$-sufficiently representative assignment exists for any instance of Ordinal-$k$-Max-on-Graphs and any $\mathcal{N}_1, \mathcal{N}_2 \subseteq \mathcal{N}$. We rely on the same high-level idea for the construction: $\sqrt{n}$ copies of each agent in $\mathcal{N}_2$ are created and then a {\sc Serial Dictatorship} algorithm is run with respect to the agents in $\mathcal{N}_1$.
The running time of {\sc $\sqrt{n}$-Serial Dictatorship} remains $O(n^{1.5})$. 

\begin{algorithm}[h!]
{\fontsize{10}{12}\selectfont
\begin{algorithmic}[1]
 \State 
 Let $\mathcal{B}$ be a multiset with $\sqrt{n}$ copies of each $j\in \mathcal{N}_2$ \;
  \For {\normalfont every agent $i \in \mathcal{N}_1$}
	   \State Let $\alpha_i$ be  $i$'s most preferred agent in $\mathcal{B}$ w.r.t.~$\succ_i$  \;
	   \State Remove $\alpha_i$ from $\mathcal{B}$ \;
   \EndFor
\State \Return $A = (\alpha_i)_{i \in \mathcal{N}_1}$ \;
\end{algorithmic}
}
\caption{\textsc{\sc $\sqrt{n}$-Serial Dictatorship}$(\mathcal{N}_1, \mathcal{N}_2, \sucv)$}
\label{alg:SD-sqrt_gen}
\end{algorithm}

The following theorem is the analog of Theorem \ref{thm:matching-GAGA}. While the proof is very similar, the counting argument here is somewhat less intuitive compared to the case of One-Sided Matching due to the differences between Definitions \ref{def:nia} and \ref{def:gaga}.

\begin{theorem}\label{thm:general-GAGA}
The assignment computed by the {\sc $\sqrt{n}$-Serial Dictatorship} algorithm is an $(\mathcal{N}_1, \mathcal{N}_2,k)$-sufficiently representative assignment.
\end{theorem}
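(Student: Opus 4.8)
The plan is to mimic the proof of Theorem \ref{thm:matching-GAGA}, arguing by contradiction, but carefully tracking how the maximum degree $k$ enters the counting. Let $A$ be the output of \textsc{$\sqrt{n}$-Serial Dictatorship}. The first condition of Definition \ref{def:gaga} is immediate from the construction, since we only create $\sqrt{n}$ copies of each agent in $\mathcal{N}_2$, and each copy is removed (hence assigned to at most one agent of $\mathcal{N}_1$) during the run. So suppose, for contradiction, that the second condition fails: there is a bipartite graph $H$ on $\mathcal{N}_1\cup\mathcal{N}_2$ of maximum degree at most $k$ such that strictly more than $k\sqrt{n}$ agents in $\mathcal{N}_1$ each prefer some $H$-neighbor in $\mathcal{N}_2$ to the agent they are assigned to in $A$. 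Call this set of "unhappy" agents $\mathcal{N}'\subseteq\mathcal{N}_1$, so $|\mathcal{N}'|> k\sqrt{n}$.

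The key idea is the same "exhaustion" argument as before, but now an unhappy agent may have up to $k$ neighbors it prefers to $A(i)$, so we first pass to a large "near-matching" sub-structure. For each $i\in\mathcal{N}'$, pick one neighbor $\beta_i\in\mathcal{N}_2$ in $H$ with $\beta_i\succ_i A(i)$ (by definition of $\mathcal{N}'$ such a $\beta_i$ exists). Define a map $i\mapsto\beta_i$ from $\mathcal{N}'$ to $\mathcal{N}_2$. Because $H$ has maximum degree at most $k$, each vertex $j\in\mathcal{N}_2$ is the image $\beta_i$ of at most $k$ agents $i$ (all such $i$ are $H$-neighbors of $j$). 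Hence the image $\mathcal{A}'=\{\beta_i : i\in\mathcal{N}'\}$ satisfies $|\mathcal{A}'|\ge |\mathcal{N}'|/k > \sqrt{n}$. Now I claim every item in $\mathcal{A}'$ is exhausted by the end of the algorithm: for any $j=\beta_i\in\mathcal{A}'$, since agent $i$ was not assigned a copy of $j$ despite preferring $j$ to its actual assignment $A(i)$, all $\sqrt{n}$ copies of $j$ must have already been taken before $i$'s turn, i.e., $j$ is exhausted. But the multiset $\mathcal{B}$ starts with exactly $\sqrt{n}|\mathcal{N}_2|\le \sqrt{n}\cdot n$ copies and there are only $n$ agents in $\mathcal{N}_1$ (and actually only $|\mathcal{N}_1|\le n$ removals total), so at most $n/\sqrt{n}=\sqrt{n}$ distinct items can have all $\sqrt{n}$ of their copies removed. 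Therefore $|\mathcal{A}'|\le\sqrt{n}$, contradicting $|\mathcal{A}'|>\sqrt{n}$.

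The main obstacle — and the place the proof genuinely differs from Theorem \ref{thm:matching-GAGA} — is the step where I extract $\mathcal{A}'$ with $|\mathcal{A}'|>\sqrt{n}$ from the $>k\sqrt{n}$ unhappy agents: one must use the degree bound on $H$ in the right direction (bounding the in-degree of the map $i\mapsto\beta_i$, not its domain), which is exactly why the threshold in Definition \ref{def:gaga} is $k\sqrt{n}$ rather than $\sqrt{n}$. A minor point to be careful about is that $\beta_i$ ranges over $\mathcal{N}_2$ only, even when $\mathcal{N}_1=\mathcal{N}_2=\mathcal{N}$, since the algorithm's multiset $\mathcal{B}$ and hence the notion of "exhausted" refers to copies of $\mathcal{N}_2$-agents; the preference $\beta_i\succ_i A(i)$ together with "$i$ was processed and got $A(i)$ instead" still forces exhaustion of $\beta_i$'s copies exactly as in the one-sided case. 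Everything else is the same counting bound $\sqrt{n}\cdot(\text{number of exhausted items})\le n$ used in Theorem \ref{thm:matching-GAGA}.
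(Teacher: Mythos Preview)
Your proposal is correct and follows essentially the same argument as the paper: verify the first condition by construction, then for the second condition pick a preferred $H$-neighbor $\beta_i$ for each of the $>k\sqrt{n}$ unhappy agents, use the degree bound on $H$ to show the image set has size $>\sqrt{n}$, and derive a contradiction from the fact that at most $\sqrt{n}$ agents in $\mathcal{N}_2$ can be exhausted. The only cosmetic difference is that the paper picks $\beta_i$ to be $i$'s \emph{best} neighbor in $H$ rather than an arbitrary preferred one, but this plays no role in the counting.
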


\begin{proof}
Let $A$ be the assignment produced by the algorithm. During the execution of the algorithm, whenever all the copies of an agent $j\in \mathcal{N}_2$  have been matched, we will say that $j$ is \emph{exhausted}. Assume towards a contradiction that $A$ is not an $(\mathcal{N}_1, \mathcal{N}_2,k)$-sufficiently representative assignment. By construction, the first condition of Definition~\ref{def:gaga} is obviously satisfied. So, there must be a graph $H$, as described in the second condition of Definition \ref{def:gaga}, with respect to which there exists a subset $\mathcal{S}_1 \subseteq \mathcal{N}_1$ with $|\mathcal{S}_1|>k\sqrt{n}$ such that every $i\in \mathcal{S}_1$ prefers her best 
neighbor in $H$, say $\beta_i$, to agent $\alpha_i$ she has been assigned to in $A$. Let $\mathcal{S}_2 \subseteq \mathcal{N}_2$ be the set that contains all these $\beta_i$s.
Because $H$ has maximum 
degree at most $k$ we have $|\mathcal{S}_2|\ge |\mathcal{S}_1|/k > \sqrt{n}$.


Consider any agent $i \in \mathcal{S}_1$.
The fact that this agent was not assigned to $\beta_i$ by \textsc{$\sqrt{n}$-Serial Dictatorship} implies that when it was $i$'s turn to pick, agent $\beta_i$ was exhausted. 
Therefore, at the end of the algorithm, all agents of $\mathcal{S}_2$ must be exhausted. 
Since an agent in $\mathcal{N}_2$ is exhausted when all its $\sqrt{n}$ copies have been assigned and there are at most $n$ agents in $\mathcal{N}_1$, we can only have as many as $\frac{n}{\sqrt{n}} = \sqrt{n}$ exhausted agents. The latter means that $|\mathcal{S}_2| \leq \sqrt{n}$, a contradiction.
\end{proof}

\subsection{The General Mechanism}
We are now ready to show that it is possible to achieve distortion $O(\sqrt{n})$
for \emph{any} problem that can be modeled as a special case of Ordinal-$\bm{k}$-Max-on-Graphs. 
Our mechanism generalizes the main idea of \nameref{alg:Match-TwoQueries} of querying each agent about for her overall favorite alternative, as well as the alternative suggested by an appropriate sufficiently representative assignment. For the latter, we need to specify $\mathcal{N}_1$ and $\mathcal{N}_2$: These are typically both equal to the whole $\mathcal{N}$, unless the problem distinguishes between actual agents and items, in which case these two groups are captured by $\mathcal{N}_1$ and $\mathcal{N}_2$, respectively. In any case, all the edges in a feasible solution have at least one endpoint in each of  $\mathcal{N}_1$ and $\mathcal{N}_2$.

\begin{algorithm}[h!]
{\fontsize{10}{12}\selectfont
\begin{algorithmic}[1]
	\State Query each agent $i\in \mathcal{N}_1$ for her favorite alternative in $\mathcal{N}_2$ w.r.t.~$\succ_i$
	\State Compute $A = \!\sqrt{n}\textsc{\sc -Serial Dictatorship}{\small (\mathcal{N}_1, \mathcal{N}_2, \sucv\!)}$
	\State Query each agent $i\in \mathcal{N}_1$ about the agent $\alpha_i \in \mathcal{N}_2$ she is assigned to in $A$
	\State Set all non-revealed values to $0$
	\State \Return a maximum-weight member of $\mathcal{F}$
\end{algorithmic}
}
\caption{\textsc{\sc General-TwoQueries}{\small $(G, \mathcal{F}, \mathcal{N}_1, \mathcal{N}_2, \sucv)$}} \label{alg:Max-on-Graphs-TwoQueries}
\end{algorithm}

Note that the final step of the algorithm involves computing a solution that is optimal according to the revealed values. There are computational issues to consider here, however, we first tackle the question of whether it is even possible to match the lower bounds of Theorem \ref{thm:general-lower} for $\lambda=2$, despite the lack of information. We briefly discuss how to transform {\sc General-TwoQueries} into a polynomial-time mechanism after the proof of Theorem \ref{thm:two-queries-general} below. Again, for the statement of the theorem we assume that $k\in \mathbb{N}$ is a constant and that, for every $G$, a matching extending $k$-family $\mathcal{F}(G)$ is specified.

\begin{customthm}{4}
\label{thm:two-queries-general}
For Ordinal-${k}$-Max-on-Graphs with feasible solutions given by $\mathcal{F}(\,\cdot\,)$,  {\sc General-TwoQueries}  has distortion $O(\sqrt{n})$.
\end{customthm}
\vspace{-7pt}
\begin{proof}
Consider any instance with valuation profile $\bv$ and relevant sets of agents $\mathcal{N}_1$ and $\mathcal{N}_2$.
Let $Y$ be the solution computed by the {\sc General-TwoQueries} mechanism when given as input $(G, \mathcal{F}, \mathcal{N}_1, \mathcal{N}_2, \sucv)$, and let $X$ denote an optimal solution. Let $w_R(Y)$ be the \emph{revealed} weight of $Y$ as seen by the mechanism, that is, the weight of $Y$ taking into account only the values that have been revealed by the queries. We will show that $w(X) \leq (1 + 10k^2\sqrt{n} ) \cdot w_R(Y)$, and the bound on the distortion will then follow by the obvious fact that $w(Y) \geq w_R(Y)$.

We can write the optimal weight as
\begin{align}\label{eq:general-two-queries-X}
w(X) = w_R(X) + w_C(X),
\end{align}
where $w_R(X)$ is the 
revealed weight of $X$ that takes into account only the values that have been revealed by the queries,
whereas $w_C(X)$ is the \emph{concealed} weight of $X$ that takes into account only the values that have not been revealed by the queries of the mechanism.
Since $Y$ is the solution that maximizes the social welfare based only on the revealed values, we have that
\begin{align}\label{eq:general-two-queries-revealed}
w_R(X) \leq w_R(Y).
\end{align}
Thus, it suffices to bound $w_C(X)$.

Let $S$ be the set of agents in $\mathcal{N}_1$ who are not queried about all their 
neighbors in $X$.
We partition $S$ into two subsets $S^{\geq}$ and $S^{<}$ consisting of agents for whom the second query of the mechanism is used to ask about someone they consider {\em better} or {\em worse} than their best 
neighbor in $X$, respectively.
For an agent $i\in \mathcal{N}_1$, let $\chi_i$ be $i$'s favorite 
neighbor in $X$ and recall that $i$ is queried about agent $\alpha_i \in \mathcal{N}_2$ to whom she is assigned according to the $(\mathcal{N}_1, \mathcal{N}_2,k)$-sufficiently representative assignment $A$. So,
\begin{align*}
&S^{\geq} = \left\{ i \in S: v_{i,\alpha_i} \geq v_{i,\chi_i}\right\}, \\
&S^{<} = \left\{ i \in S: v_{i,\alpha_i} < v_{i,\chi_i}\right\}.
\end{align*}

Let $N_X(i)$ be the set of agents who are 
neighbors of $i$ in $X$ \emph{and} for which $i$ was not queried about. 
We now define
\begin{align*}
& w_C^{\geq}(X) = \sum_{i \in S^{\geq}} \sum_{j \in N_X(i)} v_{i,j} \\
& w_C^{<}(X) = \sum_{i \in S^{<}} \sum_{j \in N_X(i)} v_{i,j}
\end{align*}
Clearly, $w_C(X) = w_C^{\geq}(X) + w_C^{<}(X)$.

For every agent $j \in \mathcal{N}_2$, let $S_j^{\geq} = \{ i \in S^{\geq}: \alpha_i=j \}$ be the set of all agents in $S^{\geq}$ that are queried about $j$ by the mechanism using the second query. So, $S^{\geq} = \bigcup_{j \in \mathcal{N}_2}S_j^{\geq}$.
Since $A$ is an $(\mathcal{N}_1, \mathcal{N}_2,k)$-sufficiently representative assignment, the first condition of Definition \ref{def:gaga} implies that $|S_j^{\geq}| \leq \sqrt{n}$ for every $j\in \mathcal{N}_2$. Therefore,

\begin{align}\label{eq:general-two-queries-concealed-geq}
w_C^{\geq}(X)
&= \sum_{j \in \mathcal{N}_2} \sum_{i \in S_j^{\geq}} \sum_{\ell \in N_X(i)} v_{i,\ell} \nonumber\\
&\leq \sum_{j \in \mathcal{N}_2} \sum_{i \in S_j^{\geq}} \sum_{\ell \in N_X(i)} v_{i,j} \nonumber\\
&\leq \sum_{j \in \mathcal{N}_2} \sum_{i \in S_j^{\geq}} k \cdot v_{i,j} \nonumber \\
&\leq k \sum_{j \in \mathcal{N}_2} |S_j^{\geq}| \max_{i \in S_j^{\geq}}v_{i,j} \nonumber \\
&\leq k \sqrt{n} \sum_{j \in \mathcal{N}_2} \max_{i \in S_j^{\geq}}v_{i,j} 
\end{align}
where the first inequality holds by the definition of the sets $S_j^{\geq}$, for every $j \in \mathcal{N}_2$. To complete our bound on $w_C^{\geq}(X)$, we need the following claim.

\begin{claim}\label{claim:bound-sum-concealed-geq}
For all the problems of interest, 
\[\sum_{j \in \mathcal{N}_2} \max_{i \in S_j^{\geq}}v_{i,j} \le 9k\cdot w_R(Y).\]
\end{claim}

\begin{proof}
For each $j \in \mathcal{N}_2$, let $i_j \in \argmax_{i \in S_j^{\geq}}v_{i,j}$. Consider the subgraph $H$ of the input graph $G$ with edge set $E(H) = \{\{i_j, j\} \,|\, j \in \mathcal{N}_2\}\}$, i.e., $H$ contains exactly the edges that define the sum of interest. In particular, we have
\begin{align*}
\sum_{j \in \mathcal{N}_2} \max_{i \in S_j^{\geq}}v_{i,j} = \sum_{j \in \mathcal{N}_2} v_{i_j,j}   =  w_R(H) . 
\end{align*}
We now claim that each node in $H$ has degree at most $2$. To see this, consider an agent $\ell\in \mathcal{N}$. There is at most one  $j\in \mathcal{N}_2$ such that $\ell\in S_j^{\geq}$ (since these sets are disjoint), and thus we may have $\ell = i_j$ for at most one $j\in \mathcal{N}_2$, resulting in the edge $\{j, \ell\}$ in $H$.  Additionally, $\ell$ may itself be in $\mathcal{N}_2$, resulting in a second edge $\{\ell, i_{\ell}\}$ in $H$. Other than these two, there can be no other edges of $H$ adjacent to $\ell$.

Since $H$ has maximum degree at most $2$, it must contain a matching $M$ of comparable weight. Specifically, $H$ must be the union of node-disjoint paths and cycles. We construct a (possibly empty) matching $M_1$ on $H$ by arbitrarily picking one edge from each odd cycle and one edge from the beginning of each odd path. If we remove $M_1$ from $H$, then the remaining graph is the union of node-disjoint even paths and even cycles, and thus can be decomposed into two disjoint matchings $M_2, M_3$ in a straightforward way. Since $M_1$, $M_2$, and $M_3$ cover all the edges of $H$, the best of them, say $M$, must have weight at least $w_R(H) / 3$, i.e, 
\begin{align*}
w_R(M|\bv)   = \frac{1}{3} w_R(H) . 
\end{align*}
Now we can work with $M$ like in the proof of Theorem \ref{thm:general-lower}. Consider the submatching $\hat{M}$ of $M$ containing the $\lfloor 2n/3k \rfloor$ heaviest edges of $M$ to get
\begin{align} \label{eq:distortion-heavy-submatching}
w_R(\hat{M}) \ge  \frac{1}{3k} w_R(M) \ge \frac{1}{9k} w_R(H).
\end{align}
Since $\mathcal{F}$ is a matching extending $k$-family and $\hat{M}$ is sufficiently small, there is some $\hat{Y}\in \mathcal{F}(G)$ such that $\hat{M}$ is a subgraph of $\hat{Y}$.
As $Y$ is a maximum-weight element of $\mathcal{F}(G)$ with respect to the revealed weights,
we directly get $w_R(Y) \ge w_R(\hat{Y}) \ge w_R(\hat{M})$ and, combining with \eqref{eq:distortion-heavy-submatching}, we have 
\begin{align*} 
w_R(Y) \ge \frac{1}{9k} w(H),
\end{align*}
as desired.
\end{proof}
By combining \eqref{eq:general-two-queries-concealed-geq} with Claim \ref{claim:bound-sum-concealed-geq}, we get 
\begin{align}\label{eq:two-queries-concealed-geq-final}
w_C^{\geq}(X) \leq 9 k^2 \sqrt{n} \cdot w_R(Y).
\end{align}


\color{black}

We next consider the quantity $w_C^{<}(X)$. By the fact that $A$ is an $(\mathcal{N}_1, \mathcal{N}_2,k)$-sufficiently representative assignment, it follows that $|S^{<}| \leq k\sqrt{n}$; otherwise $X$ would be a graph that violates the second condition of Definition \ref{def:gaga}.
Combined with the fact that all agents in $\mathcal{N}_1$ are queried about their favorite alternative, we can obtain the following upper bound on
$w_C^{<}(X)$. 
Recall that for $i\in \mathcal{N}_1$, we have $N_X(i)\subseteq \mathcal{N}_2$ and $|N_X(i)|\le k$.
\begin{align} \label{eq:general-two-queries-concealed-<}
w_C^{<}(X)
&= \sum_{i \in S^{<}} \sum_{j \in N_X(i)} v_{i,j} \nonumber \\
&\leq \sum_{i \in S^{<}} k \cdot \max_{j \in \mathcal{N}} v_{i,j} \nonumber \\
&\leq k\, |S^{<}|  \, \max_{i \in S^{<}} \max_{j \in \mathcal{N}} v_{i,j} \nonumber \\
&\leq k^2 \sqrt{n} \cdot w_R(Y).
\end{align}

The bound now follows by \eqref{eq:general-two-queries-X}, \eqref{eq:general-two-queries-revealed}, \eqref{eq:two-queries-concealed-geq-final}, \eqref{eq:general-two-queries-concealed-<}.
\end{proof}




A subtle point here is that of computational efficiency. Although designing polynomial time mechanisms is not our primary goal, it is clear that the only possible bottleneck is the last step of {\sc General-TwoQueries}. 
Indeed, the mechanism runs in polynomial time whenever there is a polynomial-time algorithm (exact or $O(1)$-approximation) for the full information version of the corresponding optimization problem.
The good news are that all variants of matching problems we presented can be solved efficiently by Edmond's algorithm \citep{edmonds1965maximum} or its extensions \citep{marsh1979matching}.

\begin{corollary}
There are deterministic polynomial-time mechanisms for General Graph Matching, Two-Sided Matching, General Graph k-Matching, and k-Constrained Resource Allocation which all use at most two queries per agent and have distortion $O(\sqrt{n})$. 
\end{corollary}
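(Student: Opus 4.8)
The plan is to read the Corollary purely as an instantiation of the general machinery already assembled. For each of the four named problems, the task reduces to two checks: first, that the problem is a genuine special case of Ordinal-$k$-Max-on-Graphs with a matching extending $k$-family for some constant $k$; second, that the full-information optimization version admits a polynomial-time exact algorithm (or $O(1)$-approximation). Given both, Theorem~\ref{thm:two-queries-general} hands us the $O(\sqrt{n})$ distortion bound for the two-query mechanism {\sc General-TwoQueries}, and the remark following its proof upgrades this to a polynomial-time mechanism, since the only computational bottleneck is the final ``return a maximum-weight member of $\mathcal{F}$'' step.

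The first check is essentially verbatim from the examples listed in Section~\ref{sec:generalizations}: General Graph Matching has $\mathcal{F}$ equal to the matchings of $G$, a matching extending $1$-family; Two-Sided Matching is the same with $G$ bipartite (still $k=1$); General Graph $k$-Matching has $\mathcal{F}$ the subgraphs of max degree $k$, a matching extending $k$-family because it already contains all matchings; and $k$-Constrained Resource Allocation is the bipartite analog, again containing all matchings, hence a matching extending $k$-family, with $\mathcal{N}_1=U_1$, $\mathcal{N}_2=U_2$. In each case $k$ is a fixed constant, so all hypotheses of Theorem~\ref{thm:two-queries-general} (and of Theorem~\ref{thm:general-GAGA}, which guarantees the sufficiently representative assignment the mechanism needs) are met. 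So I would simply state: by Theorem~\ref{thm:two-queries-general}, {\sc General-TwoQueries} applied to each of these four problems uses two queries per agent and has distortion $O(\sqrt{n})$.

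For the second check — polynomial-time implementability — I would invoke classical matching theory. Maximum-weight matching in general graphs is solvable in polynomial time by Edmonds' blossom algorithm~\citep{edmonds1965maximum}, which covers General Graph Matching and, a fortiori, the bipartite special case Two-Sided Matching (where even the Hungarian method~\citep{Kuhn56} suffices). For General Graph $k$-Matching (degree-constrained subgraphs, i.e., ``$b$-matchings'' with $b\equiv k$) and $k$-Constrained Resource Allocation (its bipartite, one-sided-bounded variant), maximum-weight solutions are computable in polynomial time via the extensions of Edmonds' algorithm to degree-constrained subgraph problems~\citep{marsh1979matching} (bipartite $b$-matching also reduces directly to min-cost flow). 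Since all non-revealed weights are zeroed out before this step, the instance fed to these algorithms is an ordinary weighted instance of the respective problem, so the polynomial-time guarantees apply directly. Combining the two checks via the remark after Theorem~\ref{thm:two-queries-general} yields deterministic polynomial-time two-query mechanisms with distortion $O(\sqrt{n})$ for all four problems.

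I do not expect any real obstacle here — the Corollary is a bookkeeping consequence of Theorem~\ref{thm:two-queries-general} plus off-the-shelf combinatorial optimization. The only point requiring a sentence of care is that the ``matching extending $k$-family'' property is a property of $\mathcal{F}$ alone, independent of the (unknown) weights, which is exactly why the same $\mathcal{F}$ can be used in the zeroed-out instance; but this is already noted in the text preceding Definition~\ref{def:feasible_families}. A secondary minor point is that $k$ must be a genuine constant for the $O(\sqrt{n})$ bound (the hidden constant in Theorem~\ref{thm:two-queries-general} is $10k^2$), which it is by the problem statements. Hence the proof is a short paragraph citing Theorem~\ref{thm:two-queries-general}, the subsequent polynomial-time remark, and the matching algorithms of~\citep{edmonds1965maximum,marsh1979matching,Kuhn56}.
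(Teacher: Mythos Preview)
Your proposal is correct and follows essentially the same approach as the paper: the corollary is stated without a separate proof, relying on the preceding paragraph which observes that {\sc General-TwoQueries} runs in polynomial time whenever the full-information problem does, and cites Edmonds' algorithm~\citep{edmonds1965maximum} and its extensions~\citep{marsh1979matching} for the four listed problems. If anything, your write-up is more explicit than the paper's in spelling out the two checks and the specific $k$ for each problem.
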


\section{Towards Tight Bounds for General Social Choice} 
\label{sec:social-choice}

Here we consider the general social choice setting where a set $\mathcal{N}$ of $n$ {\em agents} have preferences over a set $\mathcal{A}$ of $m$ {\em alternatives}. As in the One-Sided Matching problem, there is a valuation profile $\boldv=(v_{i,j})_{i \in \mathcal{N}, j \in \mathcal{A}}$ specifying the non-negative value that each agent $i$ has for every alternative $j$. The goal is to choose a single alternative $x \in \mathcal{A}$ to maximize the social welfare, that is, the total value of the agents for $x$: $\SW(x| \boldv) = \sum_{i \in \mathcal{N}} v_{i,x}$. Again, when $\bv$ is clear from context, we will drop it from notation. Similarly to One-Sided Matching and the problems discussed in the previous section, $\bv$ is unknown, and we are only given access to the ordinal profile $\sucv$ that is induced by $\bv$. Social choice mechanisms must decide a single alternative based only on $\sucv$ and the values they can learn by making a small number of queries. The notion of distortion (Definition~\ref{def:distortion}) can be extended for this setting as well, by taking the supremum over all instances with $n$ agents and $m$ alternatives, and letting $\mathcal{X}$ be the set $\mathcal{A}$ of alternatives.

For this general social choice setting, \citet{amanatidis2020peeking} showed a lower bound of $\Omega\left(m^{1/(2(\lambda+1))}\right)$ on the distortion of mechanisms that make at most $\lambda \geq 1$ queries per agent. We improve this result by showing a lower bound of $\Omega(m^{1/\lambda})$ for any constant $\lambda$. 

\begin{theorem}\label{thm:sc-lower}
In the social choice setting, the distortion of any deterministic mechanism that makes at most a constant number $\lambda \geq 1$ of queries per agent is $\Omega(m^{1/\lambda})$.
\end{theorem}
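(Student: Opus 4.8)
The plan is to build an adversarial family of instances on which every deterministic $\lambda$-query mechanism is forced, by the limited feedback from its queries, to pick an alternative whose social welfare is a $\Theta(m^{1/\lambda})$ factor away from the optimum. The overall scheme is recursive in the number of queries, mirroring the structure of the $\Omega(n^{1/\lambda})$ construction for One-Sided Matching in \citet{Amanatidis2021matching} and of the earlier (weaker) general-social-choice bound of \citet{amanatidis2020peeking}. Set $t = m^{1/\lambda}$ (up to rounding), and think of the alternatives as arranged in a $\lambda$-level hierarchy with branching factor $t$. All agents are given the \emph{same} ordinal preference, constructed so that a large ``good'' block of alternatives is ranked above a large ``bad'' block; crucially, all the alternatives the mechanism has not yet distinguished via queries look identical from the outside, so the adversary retains the freedom to place the high values wherever it likes among them.

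The key steps, in order, are as follows. First, fix the common ordinal profile and argue that, since the mechanism is deterministic and all agents are symmetric, after its first round of (at most one effective) query it has learned cardinal values for at most one alternative per agent, hence at most $n$ alternatives get ``touched''; by a counting/pigeonhole argument over the $t$ top-level branches, some branch of size $m/t = m^{(\lambda-1)/\lambda}$ contains no queried alternative and the answers seen so far are consistent with the instance being entirely supported on that branch. Second, set the values so that on this ``surviving'' branch the optimal alternative has social welfare $1$ while every alternative the mechanism could commit to outside an even-deeper surviving sub-branch has welfare $O(1/t)$; recursing $\lambda$ times multiplies a factor of $t$ at each level. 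Third, formalize the recursion: define, for each $r \le \lambda$, an instance $I_r$ on $t^r$ alternatives on which any $r$-query mechanism has distortion $\Omega(t^{?})$ — more precisely track the exact exponent so that it telescopes to $t^{? } = \Omega(m^{1/\lambda})$ at $r = \lambda$ — with the inductive step saying that the mechanism's $r$-th query can ``kill'' at most one sub-branch's worth of information, after which the adversary restricts to a surviving sub-branch and invokes $I_{r-1}$ scaled down. Fourth, check the base case $r = 1$ directly: with one query per agent and $t$ symmetric alternatives, the mechanism learns at most $n$ values, misses some alternative, and the adversary concentrates all the welfare there, giving distortion $\Omega(t)$.

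The main obstacle I expect is \emph{bookkeeping the exact welfare ratios through the recursion while keeping all instances ordinally consistent with a single fixed preference order}: the adversary must, at every level and for every possible pattern of queries the mechanism might make, be able to point to a surviving sub-branch such that (i) the values already revealed are consistent with the value being ``hidden'' in that sub-branch, (ii) the ordinal preferences the mechanism was given are still respected, and (iii) the welfare gap accumulated so far is exactly maintained. Getting (ii) right is the delicate part — one typically makes the ``good'' alternatives carry a tiny uniform positive value in the baseline instance so the ordinal ranking is pinned down, and then the adversary's only freedom is to boost exactly one alternative (or one sub-branch) to a large value; one must verify this boost never reorders anything the mechanism has queried. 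A secondary, easier obstacle is handling the $n$-versus-$m$ relationship cleanly: because each agent contributes at most one queried alternative per round, the pigeonhole step needs $m/t^{?} > n$-type slack, but since $\lambda$ is constant and we only need an asymptotic statement in $m$, we may freely take $m$ large relative to $n$ (or note the bound is vacuous otherwise), so this causes no real trouble. Once the recursion is set up with the right exponent, the final bound $\Omega(m^{1/\lambda})$ drops out by substituting $t = m^{1/\lambda}$.
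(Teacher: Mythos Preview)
Your proposal has a fatal gap at its starting point. If all agents share the same ordinal preference $\succ$, then whichever alternative sits at position~$1$ of $\succ$ has the (weakly) highest value for every agent, and therefore the highest social welfare; the mechanism can output it with zero queries and achieve distortion~$1$. Your claim that ``all the alternatives the mechanism has not yet distinguished via queries look identical from the outside'' is simply false in the social choice setting: the common ranking already totally orders the alternatives, and the adversary has \emph{no} freedom to put a higher value on a lower-ranked alternative without violating $\succ$. (This is precisely where general social choice differs from One-Sided Matching: in matching, even if every agent agrees that item $a$ is best, only one agent can receive $a$, so the rest of the matching is still in play; here a single alternative is output, and unanimity ends the story.) The secondary issue you flag --- needing $m^{1/\lambda} > n$ for the pigeonhole step --- is not the real obstacle; the real obstacle is that with identical preferences there is nothing to prove.

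The paper's construction is structurally quite different. It takes $n = m$ and gives agents \emph{distinct} but carefully correlated preferences: the alternatives are partitioned into layers $A_1, \ldots, A_{\lambda+1}$ with $|A_\ell| = \Theta(m^{(\lambda-\ell+1)/\lambda})$, and each alternative of $A_\ell$ is placed at position~$\ell$ by exactly $m/|A_\ell| = \Theta(m^{(\ell-1)/\lambda})$ agents, with a nesting property linking consecutive layers. A query at position~$\ell$ always reveals the value $m^{-\ell/\lambda}$, so every alternative the mechanism might output has welfare $\Theta(m^{-1/\lambda})$. The induction is not over query rounds and surviving sub-branches; it is over positions $\ell = 1, \ldots, \lambda+1$, and it shows that, to prevent the adversary from boosting some unqueried alternative in $A_\ell$ to welfare $\Omega(1)$, a constant fraction of the agents must be queried at \emph{each} of the first $\lambda+1$ positions --- contradicting the $\lambda$-query budget. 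The heterogeneity of the preferences is exactly what makes the adversary's threat credible at every layer, and it is the missing ingredient in your plan.
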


\begin{proof}
Let $\mathcal{M}$ be an arbitrary mechanism that makes at most $\lambda \geq 1$ queries per agent. Consider the following instance with $n$ agents and $m=n$ alternatives. We assume that $m$ satisfies the condition $m \geq \frac12 \sum_{\ell=1}^\lambda m^{(\lambda-\ell+1)/\lambda} + 2$, and also that it is superconstant; otherwise the theorem holds trivially.  We partition the set of alternatives $\mathcal{A}$ into $\lambda+2$ sets $A_1$, $A_2$, ... $A_{\lambda+1}$, $A_{\lambda+2}$, such that
\begin{itemize}
    \item $|A_\ell| = \frac12 m^{(\lambda-\ell+1)/\lambda}$ for $\ell \in [\lambda]$; 
    \item $|A_{\lambda+1}| = 2$;
    \item $|A_{\lambda+2}| = m - \frac12 \sum_{\ell=1}^\lambda m^{(\lambda-\ell+1)/\lambda} - 2.$
\end{itemize}
The ordinal profile has the following properties:
\begin{itemize}
    \item For every $\ell \in [\lambda+1]$, each alternative $j \in A_\ell$ is ranked at position $\ell$ by a set $T_{j, \ell}$ of $\frac{m}{|A_\ell|} = \Theta\left(m^{(\ell-1)/\lambda}\right)$ agents.
    \item For every $\ell \in [\lambda]$, every pair of agents that rank the same alternative in $A_\ell$ at position $\ell$, rank the same alternative in $A_{\ell+1}$ at position $\ell+1$.  
    \item For every agent, the alternatives that she does not rank in the first $\lambda+1$ positions are ranked arbitrarily from position $\lambda+2$ to $m$.
\end{itemize}
An example of the ordinal profile when $\lambda=2$ is depicted in Figure~\ref{fig:lower-2-queries} (see supplementary material). For every agent $i$, a query of $\mathcal{M}$ for alternative $j$ reveals a value of 
\begin{itemize}
    \item $m^{-\ell/\lambda}$ if $i$ ranks $j$ at position $\ell \in [\lambda+1]$, and
    \item  and a value of $0$ if $i$ ranks $j$ at any other position. 
\end{itemize}

\begin{figure*}[t!]
\centering

\tikzset{every picture/.style={line width=0.75pt}} 

\begin{tikzpicture}[x=0.75pt,y=0.75pt,yscale=-1,xscale=1]

\draw  [fill={rgb, 255:red, 155; green, 155; blue, 155 }  ,fill opacity=0.8 ] (99,42) -- (130,42) -- (130,57) -- (99,57) -- cycle ;
\draw  [fill={rgb, 255:red, 155; green, 155; blue, 155 }  ,fill opacity=0.8 ] (192,45) -- (226,45) -- (226,99.5) -- (192,99.5) -- cycle ;
\draw  [fill={rgb, 255:red, 155; green, 155; blue, 155 }  ,fill opacity=0.8 ] (99,64) -- (130,64) -- (130,79) -- (99,79) -- cycle ;
\draw  [fill={rgb, 255:red, 155; green, 155; blue, 155 }  ,fill opacity=0.8 ] (99,86) -- (130,86) -- (130,101) -- (99,101) -- cycle ;
\draw  [fill={rgb, 255:red, 155; green, 155; blue, 155 }  ,fill opacity=0.8 ] (99,110) -- (130,110) -- (130,125) -- (99,125) -- cycle ;
\draw  [fill={rgb, 255:red, 155; green, 155; blue, 155 }  ,fill opacity=0.8 ] (99,134) -- (130,134) -- (130,149) -- (99,149) -- cycle ;
\draw  [fill={rgb, 255:red, 155; green, 155; blue, 155 }  ,fill opacity=0.8 ] (99,158) -- (130,158) -- (130,173) -- (99,173) -- cycle ;
\draw  [fill={rgb, 255:red, 155; green, 155; blue, 155 }  ,fill opacity=0.8 ] (99,182) -- (130,182) -- (130,197) -- (99,197) -- cycle ;
\draw  [fill={rgb, 255:red, 155; green, 155; blue, 155 }  ,fill opacity=0.8 ] (99,261) -- (130,261) -- (130,276) -- (99,276) -- cycle ;
\draw  [fill={rgb, 255:red, 155; green, 155; blue, 155 }  ,fill opacity=0.8 ] (99,285) -- (130,285) -- (130,300) -- (99,300) -- cycle ;
\draw  [fill={rgb, 255:red, 155; green, 155; blue, 155 }  ,fill opacity=0.8 ] (99,309) -- (130,309) -- (130,324) -- (99,324) -- cycle ;
\draw  [fill={rgb, 255:red, 155; green, 155; blue, 155 }  ,fill opacity=0.8 ] (192,133.5) -- (226,133.5) -- (226,188) -- (192,188) -- cycle ;
\draw  [fill={rgb, 255:red, 155; green, 155; blue, 155 }  ,fill opacity=0.8 ] (192,264) -- (226,264) -- (226,318.5) -- (192,318.5) -- cycle ;
\draw  [fill={rgb, 255:red, 155; green, 155; blue, 155 }  ,fill opacity=0.8 ] (289,40) -- (323,40) -- (323,186.5) -- (289,186.5) -- cycle ;
\draw  [fill={rgb, 255:red, 155; green, 155; blue, 155 }  ,fill opacity=0.8 ] (289,201) -- (323,201) -- (323,335.5) -- (289,335.5) -- cycle ;
\draw  [fill={rgb, 255:red, 155; green, 155; blue, 155 }  ,fill opacity=0.8 ] (379,41) -- (533,41) -- (533,329.5) -- (379,329.5) -- cycle ;
\draw  [fill={rgb, 255:red, 155; green, 155; blue, 155 }  ,fill opacity=0.8 ] (73,61.13) -- (80.5,39) -- (88,61.13) -- (84.25,61.13) -- (84.25,105.38) -- (88,105.38) -- (80.5,127.5) -- (73,105.38) -- (76.75,105.38) -- (76.75,61.13) -- cycle ;
\draw  [fill={rgb, 255:red, 155; green, 155; blue, 155 }  ,fill opacity=0.8 ] (99,206) -- (130,206) -- (130,221) -- (99,221) -- cycle ;
\draw  [fill={rgb, 255:red, 155; green, 155; blue, 155 }  ,fill opacity=0.8 ] (73,156.13) -- (80.5,134) -- (88,156.13) -- (84.25,156.13) -- (84.25,200.38) -- (88,200.38) -- (80.5,222.5) -- (73,200.38) -- (76.75,200.38) -- (76.75,156.13) -- cycle ;
\draw    (135,51.5) -- (186.02,59.2) ;
\draw [shift={(188,59.5)}, rotate = 188.58] [color={rgb, 255:red, 0; green, 0; blue, 0 }  ][line width=0.75]    (10.93,-3.29) .. controls (6.95,-1.4) and (3.31,-0.3) .. (0,0) .. controls (3.31,0.3) and (6.95,1.4) .. (10.93,3.29)   ;
\draw    (136,74) -- (186.03,65.82) ;
\draw [shift={(188,65.5)}, rotate = 530.72] [color={rgb, 255:red, 0; green, 0; blue, 0 }  ][line width=0.75]    (10.93,-3.29) .. controls (6.95,-1.4) and (3.31,-0.3) .. (0,0) .. controls (3.31,0.3) and (6.95,1.4) .. (10.93,3.29)   ;
\draw    (137,96.5) -- (187.18,73.34) ;
\draw [shift={(189,72.5)}, rotate = 515.22] [color={rgb, 255:red, 0; green, 0; blue, 0 }  ][line width=0.75]    (10.93,-3.29) .. controls (6.95,-1.4) and (3.31,-0.3) .. (0,0) .. controls (3.31,0.3) and (6.95,1.4) .. (10.93,3.29)   ;
\draw    (137,117) -- (187.32,84.58) ;
\draw [shift={(189,83.5)}, rotate = 507.21] [color={rgb, 255:red, 0; green, 0; blue, 0 }  ][line width=0.75]    (10.93,-3.29) .. controls (6.95,-1.4) and (3.31,-0.3) .. (0,0) .. controls (3.31,0.3) and (6.95,1.4) .. (10.93,3.29)   ;
\draw    (135,138.5) -- (186.02,146.2) ;
\draw [shift={(188,146.5)}, rotate = 188.58] [color={rgb, 255:red, 0; green, 0; blue, 0 }  ][line width=0.75]    (10.93,-3.29) .. controls (6.95,-1.4) and (3.31,-0.3) .. (0,0) .. controls (3.31,0.3) and (6.95,1.4) .. (10.93,3.29)   ;
\draw    (136,161) -- (186.03,152.82) ;
\draw [shift={(188,152.5)}, rotate = 530.72] [color={rgb, 255:red, 0; green, 0; blue, 0 }  ][line width=0.75]    (10.93,-3.29) .. controls (6.95,-1.4) and (3.31,-0.3) .. (0,0) .. controls (3.31,0.3) and (6.95,1.4) .. (10.93,3.29)   ;
\draw    (137,183.5) -- (187.18,160.34) ;
\draw [shift={(189,159.5)}, rotate = 515.22] [color={rgb, 255:red, 0; green, 0; blue, 0 }  ][line width=0.75]    (10.93,-3.29) .. controls (6.95,-1.4) and (3.31,-0.3) .. (0,0) .. controls (3.31,0.3) and (6.95,1.4) .. (10.93,3.29)   ;
\draw    (137,204) -- (187.32,171.58) ;
\draw [shift={(189,170.5)}, rotate = 507.21] [color={rgb, 255:red, 0; green, 0; blue, 0 }  ][line width=0.75]    (10.93,-3.29) .. controls (6.95,-1.4) and (3.31,-0.3) .. (0,0) .. controls (3.31,0.3) and (6.95,1.4) .. (10.93,3.29)   ;
\draw    (228,76.5) -- (277.28,105.49) ;
\draw [shift={(279,106.5)}, rotate = 210.47] [color={rgb, 255:red, 0; green, 0; blue, 0 }  ][line width=0.75]    (10.93,-3.29) .. controls (6.95,-1.4) and (3.31,-0.3) .. (0,0) .. controls (3.31,0.3) and (6.95,1.4) .. (10.93,3.29)   ;
\draw    (235,162.5) -- (277.63,116.96) ;
\draw [shift={(279,115.5)}, rotate = 493.11] [color={rgb, 255:red, 0; green, 0; blue, 0 }  ][line width=0.75]    (10.93,-3.29) .. controls (6.95,-1.4) and (3.31,-0.3) .. (0,0) .. controls (3.31,0.3) and (6.95,1.4) .. (10.93,3.29)   ;

\draw (195,3.4) node [anchor=north west][inner sep=0.75pt]   {${\displaystyle \frac{\sqrt{m}}{2}}$};
\draw (108,3.4) node [anchor=north west][inner sep=0.75pt]   {${\displaystyle \frac{m}{2}}$};
\draw (112,225) node [anchor=north west][inner sep=0.75pt]    {$\vdots $};
\draw (204,210) node [anchor=north west][inner sep=0.75pt]    {$\vdots $};
\draw (301,13.4) node [anchor=north west][inner sep=0.75pt]  {${\displaystyle 2}$};
\draw (87,343) node [anchor=north west][inner sep=0.75pt]   [align=left] {{position $1$}};
\draw (183,343) node [anchor=north west][inner sep=0.75pt]   [align=left] {{position $2$}};
\draw (280,343) node [anchor=north west][inner sep=0.75pt]   [align=left] {{position $3$}};
\draw (403,343) node [anchor=north west][inner sep=0.75pt]   [align=left] {All other positions};
\draw (426,175) node [anchor=north west][inner sep=0.75pt]   [align=left] {arbitrary};
\draw (29,76.4) node [anchor=north west][inner sep=0.75pt]   {$2\sqrt{m}$};
\draw (29,170.4) node [anchor=north west][inner sep=0.75pt]   {$2\sqrt{m}$};
\end{tikzpicture}
\caption{An overview of the instance used in the proof of Theorem~\ref{thm:sc-lower} two queries ($\lambda=2$). Each rectangle in the first three positions corresponds to an alternative. Each rectangle at position $1$ contains two agents. Each rectangle at position $2$ contains the agents from $2\sqrt{m}$ rectangles at position $1$, as indicated by the arrows, meaning that those agents rank the same alternative second. The rectangles at position $3$ contain $m/2$ agents each, corresponding to $\sqrt{m}/4$ rectangles at position $2$. That is, the agents that rank second one of the first $\sqrt{m}/4$ alternatives at position $2$, rank third the alternative corresponding to the first rectangle at position $3$; similarly, the agents that rank second one of the last $\sqrt{m}/4$ alternatives at position $2$, rank third the alternative corresponding to the second rectangle at position $3$. The ranking of the alternatives in the remaining positions is consistent but otherwise arbitrary.}
\label{fig:lower-2-queries}
\end{figure*}
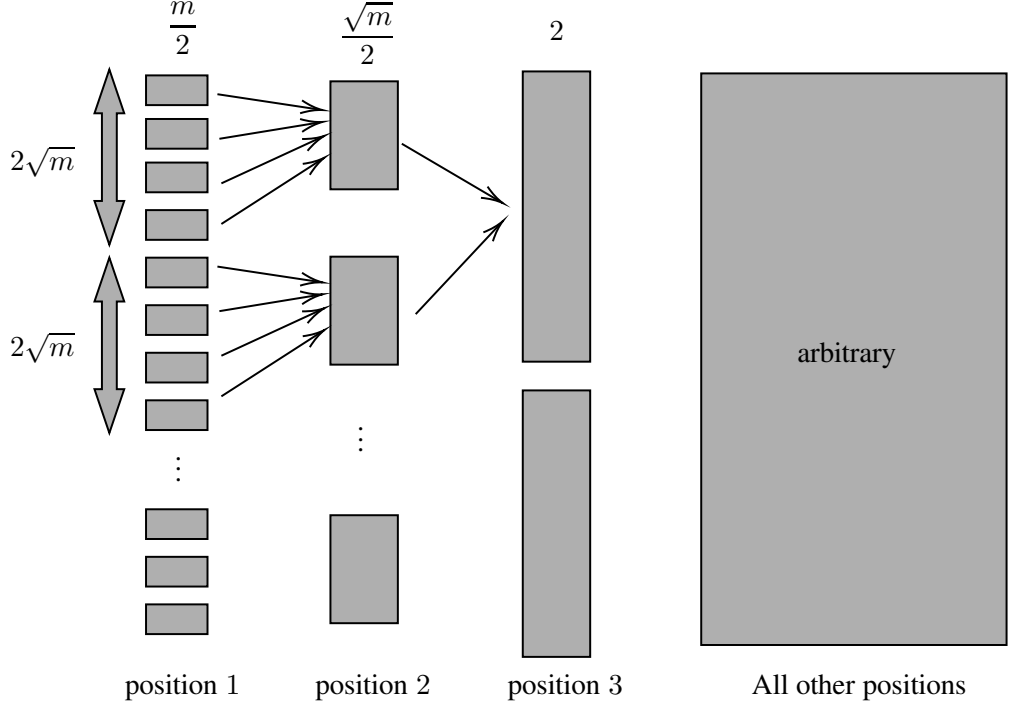

Given this instance as input, let $y$ be the alternative that $\mathcal{M}$ chooses as the winner. No matter the choice of $y$, we will define the cardinal profile so that it is consistent to the information revealed by the queries of $\mathcal{M}$, and the values of the agents for alternative $y$ are also consistent to the information that would have been revealed, irrespective of whether those values have actually been revealed. That is, any agent has a value of $m^{-\ell/\lambda}$ for $y$ if she ranks $y$ at position $\ell \in [\lambda+1]$, and a value of $0$ if she ranks $y$ at any other position. 
Hence, the social welfare of $y$ is 
\begin{itemize}
\item $\Theta\left(m^{(\ell-1)/\lambda}\right) \cdot m^{-\ell/\lambda} = \Theta(m^{-1/\lambda})$ if $y \in A_{\ell}$ for $\ell \in [\lambda+1]$, or
\item $0$ if $y \in A_{\lambda+2}$.
\end{itemize}
Consequently, to show the desired bound of $\Omega(m^{1/\lambda})$ on the distortion of $\mathcal{M}$, it suffices to assume that $y \in A_{\ell}$ for some $\ell \in [\lambda+1]$, and prove that the values of the agents that have not been revealed and do not correspond to alternative $y$ can always be defined such that there exists an alternative $x \neq y$ with social welfare $\Omega(1)$. 

Suppose towards a contradiction that the cardinal profile cannot be defined in a way so that there exists an alternative $x$ with social welfare $\Omega(1)$.
We make the following two observations:
\begin{itemize}
\item[(O1)] If there exists an alternative $x \in A_1 \setminus \{y\}$ for which at least one agent in $T_{x,1}$ is {\em not} queried by $\mathcal{M}$ for $x$, then we can set the value of this agent for $x$ to be constant. Consequently, all the agents in $\bigcup_{j \in A_1 \setminus \{y\}} T_{j,1}$ that rank alternatives different that $y$ at position $1$, must be queried at position $1$. 

\item[(O2)] Let $\varepsilon \in (0,1)$ be a constant and $\ell \in \{2, \dots, \lambda+1\}$. Consider any alternative $x \in A_\ell \setminus \{y\}$ and any set of agents $S \subseteq T_{x,\ell}$ such that $|S| \geq \varepsilon \cdot \frac{m}{|A_\ell|} = \Theta(m^{(\ell-1)/\lambda})$. If at least $\frac12  |S|$ agents in $S$ are {\em not} queried by $\mathcal{M}$ for $x$, then we could set the value of all these agents for $x$ to be $m^{-(\ell-1)/\lambda}$ (which is the revealed value when $\mathcal{M}$ queries for alternatives ranked at position $\ell-1$), and the social welfare of $x$ would be at least 
\[\frac12 |S| \cdot m^{-(\ell-1)/\lambda} = \Theta(m^{(\ell-1)/\lambda}) \cdot m^{-(\ell-1)/\lambda} = \Theta(1).\]
Consequently, for every alternative $x \in A_\ell \setminus \{y\}$ and set $S \subseteq T_{x,\ell}$ such that $|S| \geq \varepsilon \cdot \frac{m}{|A_\ell|}$, at least $\frac12 |S|$ agents in $S$ must be queried at position $\ell$ for $x$.  
\end{itemize}

Given these two observations, we are now ready to show by induction that the mechanism must make $\lambda+1$ queries for a high proportion of the agents, contradicting that $\mathcal{M}$ makes at most $\lambda$ queries per agent. 

For the base case, consider an alternative $x \in A_2 \setminus \{y\}$. By the definition of the ordinal profile, the agents in $T_{x,2}$ who rank $x$ at position $2$ are partitioned into $\frac{|A_1|}{|A_2|}$ subsets such that all $\frac{m}{|A_1|}$ agents in each subset rank first the same alternative of $A_1$. By (O1) we have that, besides the agents that rank alternative $y$ at position $1$, all other agents must be queried at position $1$. Hence, there exists a set $S \subseteq T_{x,2}$ consisting of 
$|S| \geq \left(\frac{|A_1|}{|A_2|}-1\right) \cdot \frac{m}{|A_1|}$ 
agents that are queried at position $1$. 
By the definitions of $A_1$ and $A_2$, and since $m$ is superconstant, we have that
$|S| \geq \frac12 \cdot \frac{m}{|A_2|}$. 
By (O2) for $\varepsilon = \frac12$ and $\ell = 2$, we have that at least $\frac12 |S| \geq \frac14 \cdot \frac{m}{|A_2|}$ of the agents in $S$ must also be queried at position $2$ for $x$. 

Let $\ell \in \{3, \dots, \lambda+1\}$ and assume as induction hypothesis that for every alternative $z \in A_{\ell-1} \setminus \{y\}$ there is a set of agents $S_z \subseteq T_{z,\ell-1}$ such that $|S_z| \geq \frac{1}{2^{2(\ell-2)}} \cdot \frac{m}{|A_{\ell-1}|}$ who are queried by $\mathcal{M}$ at the first $\ell-1$ positions. Consider an alternative $x \in A_\ell \setminus \{y\}$. By the definition of the ordinal profile, the agents in $T_{x,\ell}$ who rank alternative $x$ at position $\ell$ are partitioned into $\frac{|A_{\ell-1}|}{|A_\ell|}$ subsets such that all $\frac{m}{|A_{\ell-1}|}$ agents in each subset rank the same alternative in $A_{\ell-1}$ at position $\ell-1$. So, by our induction hypothesis, there is a set $S \subseteq T_{x,\ell}$ consisting of
\[|S| \geq \left( \frac{|A_{\ell-1}|}{|A_\ell|}-1 \right) \cdot \frac{1}{2^{2(\ell-2)}} \ \cdot \frac{m}{|A_{\ell-1}|}\]
agents that are queried at the first $\ell-1$ positions. 
By the definition of $A_{\ell-1}$ and $A_\ell$, and since $m$ is superconstant, we have that 
\[|S|  \geq \frac{1}{2^{2(\ell-2)+1}} \ \cdot \frac{m}{|A_\ell|}.\]
Since $\ell \leq \lambda+1$ and $\lambda$ is a constant, by observation (O2) for $\varepsilon = \frac{1}{2^{2(\ell-2)+1}}$, we have that at least 
\[\frac12 |S| \geq \frac{1}{2^{2(\ell-1)}}\cdot \frac{m}{|A_\ell|}\]
agents in $S$ must also be queried at position $\ell$ for $x$. 

Now, let $x \in A_{k+1} \setminus \{y\}$. The above induction shows that there are at least $\frac{1}{2^{2\lambda}}\cdot \frac{m}{|A_{\lambda+1}|}$ agents in $T_{x, \lambda+1}$ who must be queried by $\mathcal{M}$ at the first $\lambda+1$ positions. This contradicts the fact that $\mathcal{M}$ can make at most $k$ queries per agent, and the theorem follows.
\end{proof}

Our approach for all the problems discussed in the previous sections can also be applied to the much more general social choice setting, {\em subject to} being able to compute a particular set of alternatives.

\begin{definition} \label{def:sc-property}
Let $c \geq 1$ be any constant.
A subset of alternatives $B\subseteq \mathcal{A}$ with $|B| \leq c\cdot \sqrt{m}$ is a \emph{sufficiently representative set} if, for every alternative $j\in \mathcal{A}$, at most $\sqrt{m}$ agents prefer $j$ over their favorite alternative in $B$.
\end{definition}

We now present a mechanism that works {\em under the assumption} that sufficiently representative sets of alternatives can be (efficiently) computed; we discuss this assumption right after the statement of Theorem~\ref{thm:two-queries-social-choice}.

\begin{algorithm}[h!]
{\fontsize{10}{12}\selectfont
\begin{algorithmic}[1]
	\State Query each agent about her favorite alternative
	\State Compute a sufficiently representative set $B$
	\State Query each agent for her favorite alternative in $B$
	\State For every $j \in \mathcal{A}$, compute the revealed welfare $\SW_R(j)$
	\State \Return $y \in \arg\max_{j \in \mathcal{A}} \SW_R(j)$ 
\end{algorithmic}
}
\caption{\textsc{\sc SC-TwoQueries}$(\mathcal{N}, \mathcal{A}, \sucv)$} \label{alg:SC-TwoQueries}
\end{algorithm} 

In particular, \textsc{\sc SC-TwoQueries} (Mechanism \ref{alg:SC-TwoQueries}) first queries each agent about her overall favorite alternative (the one ranked first). Then, given a sufficiently representative set of alternatives $B$, it queries each agent for her favorite alternative in $B$. Given the answers to these two queries per agent, the mechanism outputs an alternative that maximizes the {\em revealed} social welfare which is based only on the values learned from the queries.

\begin{theorem}\label{thm:two-queries-social-choice}
The mechanism {\sc SC-TwoQueries} has distortion $O(\sqrt{m})$, when restricted to the social choice instances for which a sufficiently representative set of alternatives exists.
\end{theorem}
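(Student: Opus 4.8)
The plan is to mimic the proof of Theorem~\ref{thm:two-queries-matching}, replacing the sufficiently representative assignment by the sufficiently representative set $B$. Fix an instance for which a sufficiently representative set $B$ exists, let $y$ be the alternative returned by \textsc{SC-TwoQueries}, and let $x \in \argmax_{j\in\mathcal{A}} \SW(j)$ be an optimal alternative. As in the matching proof, I would split the optimal welfare into a \emph{revealed} part and a \emph{concealed} part: $\SW(x) = \SW_R(x) + \SW_C(x)$, where $\SW_R(x)$ sums the values $v_{i,x}$ over agents $i$ who were actually queried about $x$, and $\SW_C(x)$ sums $v_{i,x}$ over the remaining set $S$ of agents who were not queried about $x$. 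Since $y$ maximizes the revealed welfare and $x$'s revealed welfare is a lower bound on what any alternative's revealed welfare can be, we get $\SW_R(x) \le \SW_R(y) \le \SW(y)$, so it suffices to bound $\SW_C(x)$ by $O(\sqrt{m})\cdot \SW_R(y)$.

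Next I would partition $S$ exactly as in Theorem~\ref{thm:two-queries-matching}. For each agent $i\in S$, let $b_i$ be $i$'s favorite alternative in $B$ (the target of the second query). Split $S$ into $S^{\ge} = \{i \in S : v_{i,b_i} \ge v_{i,x}\}$ and $S^{<} = \{i \in S : v_{i,b_i} < v_{i,x}\}$, giving $\SW_C(x) = \SW_C^{\ge}(x) + \SW_C^{<}(x)$. For $S^{<}$: by the defining property of a sufficiently representative set, for the single alternative $x$ at most $\sqrt{m}$ agents prefer $x$ to their favorite alternative in $B$, so $|S^{<}| \le \sqrt{m}$; combined with the fact that every agent was queried about her overall favorite alternative (so $\max_j v_{i,j}$ is revealed for every $i$, and this revealed value is $\le \SW_R(y)$ after summing, since $y$ maximizes revealed welfare and in particular beats the ``query everyone's top'' alternative bound), we get $\SW_C^{<}(x) \le \sum_{i\in S^{<}} \max_j v_{i,j} \le |S^{<}|\cdot \max_{i\in S^{<}}\max_j v_{i,j} \le \sqrt{m}\cdot \SW_R(y)$, paralleling inequality~\eqref{eq:matching-concealed-<}.

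The part requiring a new idea is bounding $\SW_C^{\ge}(x)$, since the ``at most $\sqrt{n}$ agents per item'' condition of Definition~\ref{def:nia} has no direct analog here --- the set $B$ has size at most $\sqrt{m}$, but arbitrarily many agents may share the same favorite in $B$. Let $B' = \{b_i : i \in S^{\ge}\} \subseteq B$ and for each $b\in B'$ let $S_b^{\ge} = \{i\in S^{\ge} : b_i = b\}$; these sets partition $S^{\ge}$ and all agents in $S_b^{\ge}$ have their value for $b$ revealed. Then $\SW_C^{\ge}(x) = \sum_{b\in B'}\sum_{i\in S_b^{\ge}} v_{i,x} \le \sum_{b\in B'}\sum_{i\in S_b^{\ge}} v_{i,b} = \sum_{b\in B'}\SW_R^{B}(b)$, where $\SW_R^{B}(b)$ denotes the revealed welfare contributed to $b$ just by the agents in $S_b^{\ge}$ (a lower bound on the full revealed welfare $\SW_R(b)$). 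Since each $\SW_R(b)\le \SW_R(y)$ and $|B'|\le |B|\le \sqrt{m}$, this gives $\SW_C^{\ge}(x) \le \sqrt{m}\cdot \SW_R(y)$, the analog of~\eqref{eq:matching-concealed-geq} with the $|S_j^{\ge}|\le \sqrt{n}$ step replaced by the size bound on $B$ itself. Adding the three bounds yields $\SW(x) \le (1 + 2\sqrt{m})\cdot \SW_R(y) \le (1+2\sqrt{m})\cdot \SW(y)$, establishing distortion $O(\sqrt{m})$. The one subtlety to be careful about is the inequality ``$\sum_b \SW_R(b) \le \sqrt{m}\cdot \SW_R(y)$'': it is immediate because there are at most $\sqrt{m}$ summands and each is at most $\max_j \SW_R(j) = \SW_R(y)$; the same remark underlies the $S^{<}$ bound, where the relevant alternative is everyone's queried favorite. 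I expect no genuine obstacle beyond keeping the bookkeeping of which values count as ``revealed'' straight, exactly as in Theorem~\ref{thm:two-queries-matching}.
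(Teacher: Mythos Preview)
Your proposal is correct and follows essentially the same approach as the paper's proof: the same revealed/concealed decomposition, the same partition of $S$ into $S^{\ge}$ and $S^{<}$, and the same two bounds---using $|B|\le\sqrt{m}$ together with $\SW_R(b)\le\SW_R(y)$ for the $S^{\ge}$ part, and using $|S^{<}|\le\sqrt{m}$ together with the first-position query for the $S^{<}$ part. The ``new idea'' you identify for $S^{\ge}$ (replacing the agents-per-item bound by the $|B|\le\sqrt{m}$ bound) is exactly what the paper does, so there is no divergence.
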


\begin{proof}
Consider any social choice instance with valuation profile $\bv$ that induces the ordinal preference profile $\sucv$. Let $y$ be the alternative chosen by the mechanism when given as input this instance, and denote by $x$ the optimal alternative. We will show that $\SW(x) \leq (1 + (1+c)\cdot \sqrt{m}) \SW_R(y)$. The bound on the distortion will then follow by the obvious fact that $\SW(y) \geq \SW_R(y)$.
	
We can write the optimal welfare as
\begin{align}\label{eq:sc-main-expression}
\SW(x) &= \SW_R(x) + \SW_C(x) \nonumber \\
&\le \SW_R(y)  + \SW_C(x) \,,
\end{align}
where $\SW_C(x)$ is the concealed welfare of $x$, consisting of the values of agents for $x$ that were not revealed by the queries of the mechanism, and the inequality follows by the fact that $y$ is the alternative that maximizes the revealed welfare. Let $S$ be the set of agents who were {\em not} queried about their value for $x$, and partition $S$ into the following two subsets: 
\begin{itemize}
    \item $S^{\geq}$ consists of the agents in $S$ for whom the second query is about an alternative that the agent considers {\em better} than $x$;
    \item $S^<$ consists of the agents in $S$ for whom the second query is about an alternative that the agent considers {\em worse} than $x$.
\end{itemize}
Given these sets, now let
\begin{align*}
\SW_{C}^{\geq}(x) = \sum_{i \in S^{\geq}}v_{i,x} \text{ \ \ \ and \ \ \ } \SW_C^{<}(x) = \sum_{i \in S^<}v_{i,x}.
\end{align*}
be the contribution of the agents in $S^{\geq}$ and of the agents in $S^<$ to the concealed welfare of $x$, respectively. 
That is, 
\[\SW_C(x) = \SW_C^{\geq}(x) + \SW_C^{<}(x).\]

By the definition of the mechanism, each agent is queried about her favorite alternative in the sufficiently representative set $B$. 
For every $j \in B \setminus \{x\}$, let $S^{\geq}_j \subseteq S^{\geq}$ be the set of agents in $S^{\geq}$ who are queried for alternative $j$ instead of $x$. Thus, $S^{\geq} = \bigcup_{j \in B \setminus \{x\}} S^{\geq}_j$. By the definition of $S^{\geq}$, the fact that $y$ maximizes the revealed welfare, and since $|B| \leq c\cdot \sqrt{m}$, we obtain  
\begin{align} 
\SW_{C}^{\geq}(x) 
&= \sum_{j \in B \setminus \{x\}} \sum_{i \in S^{\geq}_j} v_{i,x} \nonumber \\
&\leq \sum_{j \in B \setminus \{x\}}  \sum_{i\in S_j^{\geq}} v_{i,j}  \nonumber \\
&\leq \sum_{j \in B \setminus \{x\} } \SW_R(j|\bv) \nonumber \\
&\leq |B| \cdot \SW_R(y|\bv) \nonumber \\
&\leq c\cdot \sqrt{m} \cdot \SW_R(y). \label{eq:sc-concealed-geq}
\end{align}
	
Since all the agents in $S^{<}$ are queried for alternatives in the sufficiently representative set $B$ that they consider worse than $x$ and $B$, it must be the case that $|S^{<}| \leq \sqrt{m}$. Since all agents are queried at the first position for their favorite alternative, we obtain 
\begin{align}
\SW_{C}^{<}(x) 
&= \sum_{i \in S^{<}} v_{i,x}  \nonumber \\
&\leq \sum_{i \in S^{<}} \max_{j \in \mathcal{A}} v_{i,j} \nonumber \\
&\leq |S^{<}| \cdot \max_{i \in S^{<}} \max_{j \in \mathcal{A}} v_{i,j} \nonumber \\
&\leq \sqrt{m} \cdot \SW_R(y). \label{eq:sc-concealed-<}
\end{align}
	
The bound now follows by \eqref{eq:sc-main-expression}, \eqref{eq:sc-concealed-geq} and \eqref{eq:sc-concealed-<}.
\end{proof}

A sufficiently representative set of alternatives trivially exists when $m$ is much larger than $n$ (namely, when $m=\Omega(n^2)$). In contrast, when $m$ is much smaller than $n$, sufficiently representative sets of alternatives do not always exist.\footnote{For example, for any $k > \sqrt{m}$, consider an instance with $n = k \cdot m!$ agents, such that for each possible ordering of the $m$ alternatives there are exactly $k$ agents that have it as their preference. Then, for any subset $B$ of at most $\sqrt{m}$ alternatives and any alternative $j \in \mathcal{A} \setminus B$, there are at least $k > \sqrt{m}$ agents that prefer $j$ over any alternative in $B$.} 
\citet{jiang2020approximately} showed the following useful result:

\begin{theorem}[\citep{jiang2020approximately}]\label{thm:stable_committee}
For any $\xi \in [n]$, there exists a set $S$ of alternatives with $|S| \leq 16\cdot n/\xi$ such that for every $j \in A$, there are at most $\xi$ agents that prefer $j$ over their favorite alternative in $S$.
\end{theorem}

A set $S$ as in the theorem above is called an {\em approximately stable committee}~\cite{cheng2020group,jiang2020approximately}. Clearly, when $m = \Omega(n)$ and $\xi = \sqrt{n}$, an approximately stable committee is also a sufficiently representative set with $c = 16$. Therefore, combining  Theorems~\ref{thm:two-queries-social-choice} and \ref{thm:stable_committee}, we obtain the following.

\begin{corollary}
When $m = \Omega(n)$, {\sc SC-TwoQueries} has distortion $O(\sqrt{m})$.
\end{corollary}


\section{Conclusion and Open Problems} \label{sec:open}
In this paper, we showed that for a large class of problems, which includes One-Sided Matching and many other well-studied graph-theoretic problems, it is possible to achieve a distortion of $O(\sqrt{n})$ using a deterministic mechanism that makes at most two queries per agent, and that this is best possible asymptotically. Our whole methodology is based on computing assignments of agents to items or other agents that exhibit a very particular structure. In addition, in the social choice setting, when $m = \Omega(n)$, sets of alternatives with analogous properties can be computed, and our methodology yields a two-query mechanism with best possible distortion for this setting as well. 

It is an interesting open problem to design a mechanism that makes two queries and achieves the best possible distortion of $O(\sqrt{m})$ when $m = o(n)$, or show that this is impossible. We suspect that to obtain a positive result one would need to come up with an adaptive mechanism, which decides where to ask each query based not only on the ordinal information, but also on the answers to all previous ones.
Another question, about any of the problems we considered, is whether one can design mechanisms that make at most a constant $\lambda \geq 3$ queries per agent and their distortion matches the lower bound of $\Omega(n^{1/\lambda})$ (or, in the case of social choice, $\Omega(m^{1/\lambda})$). Again, we strongly suspect that the same type of adaptivity will be required for this task as well.

\section*{Acknowledgements}
This work is partially supported by the ERC Advanced Grant 788893 AMDROMA ``Algorithmic and Mechanism Design Research in Online Markets'', the MIUR PRIN project ALGADIMAR ``Algorithms, Games, 
 and Digital Markets'', and the NWO Veni project No.~VI.Veni.192.153.

\bibliographystyle{plainnat}
\bibliography{references}

\begin{thebibliography}{35}
\providecommand{\natexlab}[1]{#1}
\providecommand{\url}[1]{\texttt{#1}}
\expandafter\ifx\csname urlstyle\endcsname\relax
  \providecommand{\doi}[1]{doi: #1}\else
  \providecommand{\doi}{doi: \begingroup \urlstyle{rm}\Url}\fi

\bibitem[Abraham et~al.(2007)Abraham, Blum, and Sandholm]{abraham2007clearing}
David~J. Abraham, Avrim Blum, and Tuomas Sandholm.
\newblock Clearing algorithms for barter exchange markets: Enabling nationwide
  kidney exchanges.
\newblock In \emph{Proceedings of the 8th ACM Conference on Electronic commerce
  (EC)}, pages 295--304, 2007.

\bibitem[Abramowitz and Anshelevich(2018)]{abramowitz2017utilitarians}
Ben Abramowitz and Elliot Anshelevich.
\newblock Utilitarians without utilities: Maximizing social welfare for graph
  problems using only ordinal preferences.
\newblock In \emph{Proceedings of the 32nd {AAAI} Conference on Artificial
  Intelligence ({AAAI})}, pages 894--901, 2018.

\bibitem[Abramowitz et~al.(2019)Abramowitz, Anshelevich, and
  Zhu]{abramowitz2019awareness}
Ben Abramowitz, Elliot Anshelevich, and Wennan Zhu.
\newblock Awareness of voter passion greatly improves the distortion of metric
  social choice.
\newblock In \emph{Proceedings of the The 15th Conference on Web and Internet
  Economics ({WINE})}, pages 3--16, 2019.

\bibitem[Amanatidis et~al.(2021)Amanatidis, Birmpas, Filos{-}Ratsikas, and
  Voudouris]{amanatidis2020peeking}
Georgios Amanatidis, Georgios Birmpas, Aris Filos{-}Ratsikas, and Alexandros~A.
  Voudouris.
\newblock Peeking behind the ordinal curtain: Improving distortion via cardinal
  queries.
\newblock \emph{Artificial Intelligence}, 296:\penalty0 103488, 2021.

\bibitem[Amanatidis et~al.(2022)Amanatidis, Birmpas, Filos{-}Ratsikas, and
  Voudouris]{Amanatidis2021matching}
Georgios Amanatidis, Georgios Birmpas, Aris Filos{-}Ratsikas, and Alexandros~A.
  Voudouris.
\newblock A few queries go a long way: Information-distortion tradeoffs in
  matching.
\newblock \emph{Journal of Artificial Intelligence Research}, 74, 2022.

\bibitem[Anshelevich and Postl(2017)]{anshelevich2017randomized}
Elliot Anshelevich and John Postl.
\newblock Randomized social choice functions under metric preferences.
\newblock \emph{Journal of Artificial Intelligence Research}, 58:\penalty0
  797--827, 2017.

\bibitem[Anshelevich and Sekar(2016)]{anshelevich2016blind}
Elliot Anshelevich and Shreyas Sekar.
\newblock Blind, greedy, and random: Algorithms for matching and clustering
  using only ordinal information.
\newblock In \emph{Proceedings of the 30th {AAAI} Conference on Artificial
  Intelligence ({AAAI})}, pages 390--396, 2016.

\bibitem[Anshelevich and Zhu(2018)]{anshelevich2018ordinal}
Elliot Anshelevich and Wennan Zhu.
\newblock Ordinal approximation for social choice, matching, and facility
  location problems given candidate positions.
\newblock In \emph{Proceedings of the 14th International Conference on Web and
  Internet Economics ({WINE})}, pages 3--20, 2018.

\bibitem[Anshelevich and Zhu(2019)]{anshelevich2017tradeoffs}
Elliot Anshelevich and Wennan Zhu.
\newblock Tradeoffs between information and ordinal approximation for bipartite
  matching.
\newblock \emph{Theory of Computing Systems}, 63\penalty0 (7):\penalty0
  1499--1530, 2019.

\bibitem[Anshelevich et~al.(2018)Anshelevich, Bhardwaj, Elkind, Postl, and
  Skowron]{anshelevich2018approximating}
Elliot Anshelevich, Onkar Bhardwaj, Edith Elkind, John Postl, and Piotr
  Skowron.
\newblock Approximating optimal social choice under metric preferences.
\newblock \emph{Artificial Intelligence}, 264:\penalty0 27--51, 2018.

\bibitem[Anshelevich et~al.(2021)Anshelevich, Filos-Ratsikas, Shah, and
  Voudouris]{survey2021}
Elliot Anshelevich, Aris Filos-Ratsikas, Nisarg Shah, and Alexandros~A.
  Voudouris.
\newblock Distortion in social choice problems: The first 15 years and beyond.
\newblock In \emph{Proceedings of the 30th International Joint Conference on
  Artificial Intelligence {(IJCAI)}}, pages 4294--4301, 2021.

\bibitem[Anshelevich et~al.(2022)Anshelevich, Filos{-}Ratsikas, and
  Voudouris]{anshelevich2022distributed}
Elliot Anshelevich, Aris Filos{-}Ratsikas, and Alexandros~A. Voudouris.
\newblock The distortion of distributed metric social choice.
\newblock \emph{Artificial Intelligence}, 308:\penalty0 103713, 2022.

\bibitem[Aziz et~al.(2017)Aziz, Elkind, Faliszewski, Lackner, and
  Skowron]{aziz2017condorcet}
Haris Aziz, Edith Elkind, Piotr Faliszewski, Martin Lackner, and Piotr Skowron.
\newblock The {C}ondorcet principle for multiwinner elections: from
  shortlisting to proportionality.
\newblock In \emph{Proceedings of the 26th International Joint Conference on
  Artificial Intelligence (IJCAI)}, pages 84--90, 2017.

\bibitem[Benad{\`{e}} et~al.(2021)Benad{\`{e}}, Nath, Procaccia, and
  Shah]{benade2017preference}
Gerdus Benad{\`{e}}, Swaprava Nath, Ariel~D. Procaccia, and Nisarg Shah.
\newblock Preference elicitation for participatory budgeting.
\newblock \emph{Management Science}, 67\penalty0 (5):\penalty0 2813--2827,
  2021.

\bibitem[Boutilier et~al.(2015)Boutilier, Caragiannis, Haber, Lu, Procaccia,
  and Sheffet]{boutilier2015optimal}
Craig Boutilier, Ioannis Caragiannis, Simi Haber, Tyler Lu, Ariel~D. Procaccia,
  and Or~Sheffet.
\newblock Optimal social choice functions: A utilitarian view.
\newblock \emph{Artificial Intelligence}, 227:\penalty0 190--213, 2015.

\bibitem[Caragiannis et~al.(2016)Caragiannis, Filos-Ratsikas, Frederiksen,
  Hansen, and Tan]{caragiannis2016truthful}
Ioannis Caragiannis, Aris Filos-Ratsikas, S{\o}ren Kristoffer~Stiil
  Frederiksen, Kristoffer~Arnsfelt Hansen, and Zihan Tan.
\newblock Truthful facility assignment with resource augmentation: An exact
  analysis of serial dictatorship.
\newblock In \emph{International Conference on Web and Internet Economics
  ({WINE})}, pages 236--250, 2016.

\bibitem[Caragiannis et~al.(2017)Caragiannis, Nath, Procaccia, and
  Shah]{caragiannis2017subset}
Ioannis Caragiannis, Swaprava Nath, Ariel~D. Procaccia, and Nisarg Shah.
\newblock Subset selection via implicit utilitarian voting.
\newblock \emph{Journal of Artificial Intelligence Research}, 58:\penalty0
  123--152, 2017.

\bibitem[Caragiannis et~al.(2022)Caragiannis, Shah, and Voudouris]{CSV22}
Ioannis Caragiannis, Nisarg Shah, and Alexandros~A. Voudouris.
\newblock The metric distortion of multiwinner voting.
\newblock In \emph{Proceedings of the 36th {AAAI} Conference on Artificial
  Intelligence ({AAAI})}, pages 4900--4907, 2022.

\bibitem[Charikar and Ramakrishnan(2022)]{charikar2022randomized}
Moses Charikar and Prasanna Ramakrishnan.
\newblock Metric distortion bounds for randomized social choice.
\newblock In \emph{Proceedings of the 2022 ACM-SIAM Symposium on Discrete
  Algorithms ({SODA})}, pages 2986--3004, 2022.

\bibitem[Cheng et~al.(2020)Cheng, Jiang, Munagala, and Wang]{cheng2020group}
Yu~Cheng, Zhihao Jiang, Kamesh Munagala, and Kangning Wang.
\newblock Group fairness in committee selection.
\newblock \emph{ACM Transactions on Economics and Computation (TEAC)},
  8\penalty0 (4):\penalty0 1--18, 2020.

\bibitem[Ebadian et~al.(2022)Ebadian, Kahng, Peters, and
  Shah]{ebadian2022optimized}
Soroush Ebadian, Anson Kahng, Dominik Peters, and Nisarg Shah.
\newblock Optimized distortion and proportional fairness in voting.
\newblock In \emph{Proceedings of the 23rd ACM Conference on Economics and
  Computation ({EC})}, pages 523--600, 2022.

\bibitem[Edmonds(1965)]{edmonds1965maximum}
Jack Edmonds.
\newblock Maximum matching and a polyhedron with 0, 1-vertices.
\newblock \emph{Journal of research of the National Bureau of Standards B},
  69\penalty0 (125-130):\penalty0 55--56, 1965.

\bibitem[Filos-Ratsikas et~al.(2014)Filos-Ratsikas, Frederiksen, and
  Zhang]{Aris14}
Aris Filos-Ratsikas, S{\o}ren Kristoffer~Stiil Frederiksen, and Jie Zhang.
\newblock {Social welfare in one-sided matchings: Random priority and beyond}.
\newblock In \emph{Proceedings of the 7th Symposium of Algorithmic Game Theory
  ({SAGT})}, pages 1--12, 2014.

\bibitem[Filos{-}Ratsikas et~al.(2020)Filos{-}Ratsikas, Micha, and
  Voudouris]{ratsikas2020distributed}
Aris Filos{-}Ratsikas, Evi Micha, and Alexandros~A. Voudouris.
\newblock The distortion of distributed voting.
\newblock \emph{Artificial Intelligence}, 286:\penalty0 103343, 2020.

\bibitem[Gale and Shapley(1962)]{GaleShapley62}
David Gale and Lloyd~S. Shapley.
\newblock College admissions and the stability of marriage.
\newblock \emph{The American Mathematical Monthly}, 69\penalty0 (1):\penalty0
  9--15, 1962.

\bibitem[Gkatzelis et~al.(2020)Gkatzelis, Halpern, and
  Shah]{gkatzelis2020resolving}
Vasilis Gkatzelis, Daniel Halpern, and Nisarg Shah.
\newblock Resolving the optimal metric distortion conjecture.
\newblock In \emph{Proceedings of the 61st Annual IEEE Symposium on Foundations
  of Computer Science ({FOCS})}, pages 1427--1438, 2020.

\bibitem[Hylland and Zeckhauser(1979)]{hylland1979efficient}
Aanund Hylland and Richard Zeckhauser.
\newblock The efficient allocation of individuals to positions.
\newblock \emph{Journal of Political economy}, 87\penalty0 (2):\penalty0
  293--314, 1979.

\bibitem[Jiang et~al.(2020)Jiang, Munagala, and Wang]{jiang2020approximately}
Zhihao Jiang, Kamesh Munagala, and Kangning Wang.
\newblock Approximately stable committee selection.
\newblock In \emph{Proceedings of the 52nd Annual ACM Symposium on Theory of
  Computing (STOC)}, pages 463--472, 2020.

\bibitem[Kuhn(1956)]{Kuhn56}
Harold~W. Kuhn.
\newblock Variants of the {H}ungarian method for assignment problems.
\newblock \emph{Naval Research Logistics Quarterly}, 3\penalty0 (4):\penalty0
  253--258, 1956.

\bibitem[Ma et~al.(2021)Ma, Menon, and Larson]{ma2021matching}
Thomas Ma, Vijay Menon, and Kate Larson.
\newblock Improving welfare in one-sided matchings using simple threshold
  queries.
\newblock In Zhi{-}Hua Zhou, editor, \emph{Proceedings of the 30th
  International Joint Conference on Artificial Intelligence {(IJCAI)}}, pages
  321--327, 2021.

\bibitem[Mandal et~al.(2019)Mandal, Procaccia, Shah, and
  Woodruff]{mandalefficient}
Debmalya Mandal, Ariel~D. Procaccia, Nisarg Shah, and David~P. Woodruff.
\newblock Efficient and thrifty voting by any means necessary.
\newblock In \emph{Proceedings of the 33rd Conference on Neural Information
  Processing Systems ({NeurIPS})}, pages 7178--7189, 2019.

\bibitem[Mandal et~al.(2020)Mandal, Shah, and Woodruff]{mandal2020optimal}
Debmalya Mandal, Nisarg Shah, and David~P Woodruff.
\newblock Optimal communication-distortion tradeoff in voting.
\newblock In \emph{Proceedings of the 21st ACM Conference on Economics and
  Computation ({EC})}, pages 795--813, 2020.

\bibitem[Marsh~III(1979)]{marsh1979matching}
Alfred~Burton Marsh~III.
\newblock \emph{Matching algorithms.}
\newblock The Johns Hopkins University, 1979.

\bibitem[Procaccia and Rosenschein(2006)]{procaccia2006distortion}
Ariel~D. Procaccia and Jeffrey~S. Rosenschein.
\newblock The distortion of cardinal preferences in voting.
\newblock In \emph{International Workshop on Cooperative Information Agents
  ({CIA})}, pages 317--331, 2006.

\bibitem[Roth and Sotomayor(1992)]{RothSotomayor92}
Alvin~E. Roth and Marilda Sotomayor.
\newblock Two-sided matching.
\newblock In \emph{Handbook of Game Theory with Economic Applications},
  volume~1, chapter~16, pages 485 -- 541. Elsevier, 1992.

\end{thebibliography}

\end{document}